\def\arXiv#1{\href{http://arxiv.org/abs/#1}{arXiv:#1}}
\newcolumntype{P}[1]{>{\centering\arraybackslash}m{#1}}
\def\?[#1]{\textbf{[#1]}\marginpar{\Large{\textbf{??}}}}
\let\epsilon=\varepsilon 
\newcommand{\CC}{{\mathbb C}}
\newtheorem{prop}{Proposition}
\newtheorem{lemm}[prop]{Lemma}
\numberwithin{equation}{section}
\DeclareMathOperator{\Spec}{Spec}
\let\Im=\Imag
\let\Re=\Real
\newcommand\reallywidehat[1]{\arraycolsep=0pt\relax%
\begin{array}{c}
\stretchto{
  \scaleto{
    \scalerel*[\widthof{\ensuremath{#1}}]{\kern-.5pt\bigwedge\kern-.5pt}
    {\rule[-\textheight/2]{1ex}{\textheight}} 
  }{\textheight} %
}{0.5ex}\\           
#1\\                 
\rule{-1ex}{0ex}
\end{array}
}
\begin{document}


\title[From the chiral to Bistritzer--MacDonald models]{
From the chiral model of TBG to the Bistritzer--MacDonald model}

\author{Simon Becker}
\email{simon.becker@math.ethz.ch}
\address{ETH Zurich, 
Institute for Mathematical Research, 
Rämistrasse 101, 8092 Zurich, 
Switzerland}

\author{Maciej Zworski}
\email{zworski@math.berkeley.edu}
\address{Department of Mathematics, University of California,
Berkeley, CA 94720, USA}

\begin{abstract}
We analyse the splitting of exact flat bands in the chiral model of the
twisted bilayer graphene (TBG) when the $AA'/BB'$ coupling of the full Bistritzer--MacDonald 
model is taken into account. The first-order perturbation caused by the $AA'/BB'$ 
potential the same for both bands and satisfies interesting 
symmetries (see \eqref{eq:prope}),  in particular it vanishes on the line defined by the $K$ points. The splitting of the flat bands is governed by the quadratic term which vanishes at the $K$ points. 
\end{abstract}

\maketitle

\section{Introduction and statement of the main result}
\label{s:int}

The Bistritzer--MacDonald Hamiltonian (BMH)  \cite{BM11} is considered a good model for
the study of twisted bilayer graphene (TBG) and is famous for an accurate prediction 
of the twisting angle at which superconductivity occurs \cite{Cao} (see \cite{CGG, Wa22} for its rigorous derivation). \begin{figure}
\includegraphics[width=16cm]{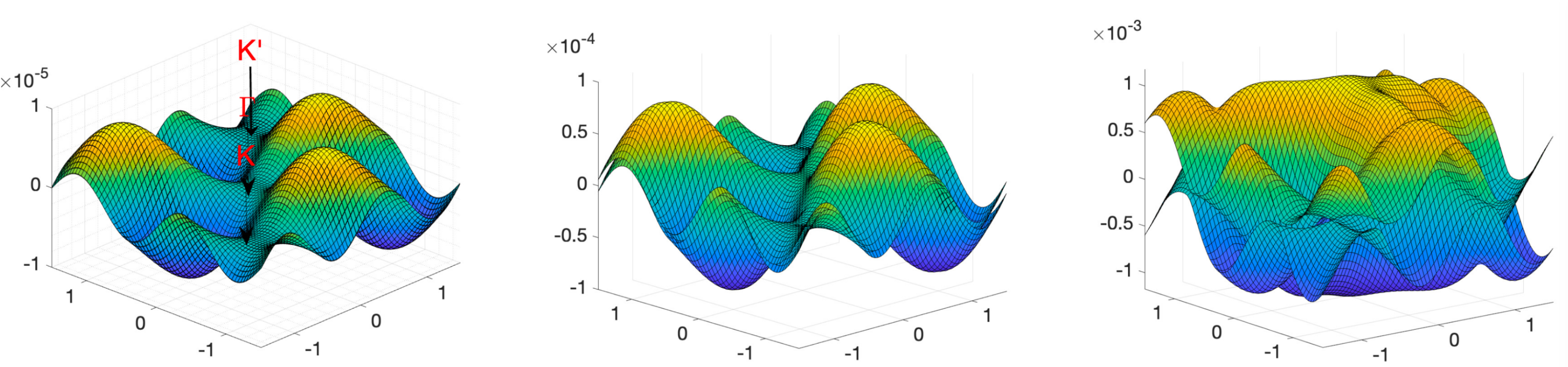}
\caption{\label{f:2} Plots of $ k \mapsto E_{\pm 1} ( \alpha, \lambda, k ) $ for 
$ \alpha $ the first real magic element of $ \mathcal A $ and 
$ \lambda =  10^{-3}, 10^{-2}, 10^{-1 } $. We see that for very small coupling the flat bands
``move together" and split only when the coupling gets larger; the quadratic term controls
the splitting of the bands, see Figure~\ref{f:1}. For animated versions see 
\url{https://math.berkeley.edu/~zworski/Chiral2BM.mp4} and
\url{https://math.berkeley.edu/~zworski/1band_1D.mp4}.
}
\end{figure}
In our notation it takes
the form
\begin{equation}
\label{eq:defBM} H ( \alpha, \lambda ) := \begin{pmatrix} \lambda C & D ( \alpha)^* \\
D ( \alpha ) & \lambda C \end{pmatrix} : H^1 ( \mathbb C ; \mathbb C^4 ) \to 
L^2 ( \mathbb C ; \mathbb C^4 ),  \ \ \alpha \in \mathbb C, \ \ \lambda \in \mathbb R,  \end{equation}
where $ D ( \alpha ) $ and $ C $ are defined in \eqref{eq:Hamiltonian}. The bands
for \eqref{eq:defBM} are given as the eigenvalues of $ e^{- i \langle k , z \rangle } 
H ( \alpha, \lambda ) e^{ i \langle k , z \rangle } $ acting on $ L^2_0 $ (a subspace 
of $ L^2_{\rm{loc}} ( \mathbb C ; \mathbb C^4 ) $ satisfying a periodicity condition -- see \eqref{eq:Lk}). 
The chiral limit of BMH corresponds to $ \lambda = 0$ and its elegant mathematical 
properties have been exploited first by Tarnopolsky--Kruchkov--Vishwanath \cite{magic}, then mathematically by Watson--Luskin \cite{lawa}, and the authors and their collaborators \cite{phys, beta, bhz1, bhz2, bhz3, yang, gaz, BHWY, hiz}. 

The bands in the chiral model, that is the eigenvalues of  $ e^{- i \langle k , z \rangle } 
H ( \alpha, 0 ) e^{ i \langle k , z \rangle } $ on $ L^2_0 $,  are symmetric with respect to 
$ 0 $ and we can label them as $ E_{\pm \ell } ( \alpha, 0 , k ) $, $ E_{-\ell } = - E_{\ell} $, 
$ E_1 \geq 0 $. 
$ \ell \in \mathbb Z \setminus \{ 0 \} $. Following \cite{beta, bhz2}, there exists 
a discrete set of {\em magic} $ \alpha$'s, $ \mathcal A \subset \mathbb C $, such that $ E_{\pm 1} ( \alpha , 0 , k ) \equiv 0 $  for all $k \in \CC$, 
if and only if $ \alpha \in \mathcal A $ (the angle of twisting is proportional to $ 1/\alpha $). A magic $ \alpha $ is {\em simple} if $ E_{\ell } ( \alpha, 0 , k ) 
> 0$ for $ \ell > 1 $ -- see \cite{bhz2} and, for the existence and properties of degenerate
magic $ \alpha $'s, \cite{bhz3}.

Since flat bands in the chiral limit are uniformly gapped from the rest of the spectrum \cite{bhz2,bhz3}, we can still consider, for small $\lambda$, the bands 
$ E_{\pm \ell} ( \alpha, \lambda, k ) $, where for $ \alpha \in \mathcal A $, $ E_{\pm \ell } ( \alpha, 0 , k ) \equiv 0 $. 
The purpose of this note is to use symmetries of BHM \eqref{eq:defBM}, some
properties of theta functions, 
and standard degenerate perturbation theory to show 

\noindent
{\bf Theorem.} {\em Suppose that $ \alpha \in \mathcal A \cap \mathbb R $ is simple and 
that $ k \mapsto E_{\pm 1 } ( \alpha, \lambda , k ) $ are the two lowest bands (in 
absolute value) of BMH in 
\eqref{eq:defBM}. Then there
exist  $e(\alpha,\bullet),f(\alpha,\bullet)\in C^{\infty}(\mathbb C/\Lambda^*)$ such that
\begin{equation}
\label{eq:Epm1}  E_{\pm 1 } ( \alpha, \lambda , k ) = e ( \alpha, k ) \lambda 
\pm |f ( \alpha, k )| \lambda^2 + 
 \mathcal O ( \lambda^3 ) , \ \ 
\lambda \to 0 , \end{equation}
$ f ( \pm K ) = 0 $, ($ \omega K \equiv K \!\! \! \mod \! \Lambda^*$, $ K \neq 0 $), and 
\begin{equation}
\label{eq:prope}   e (\alpha,   k ) = -e ( \alpha,  - k ) = - e( \alpha, \bar k ) = e(\alpha,  \omega k ) , \ \ \omega = e^{ { 2 \pi i } /3} .  
\end{equation}}

\noindent
{\bf Remarks.} 1. A  surprising feature, which does not seem to have been observed
before, is that the linear term does not depend on the band: the two bands move together for small 
coupling constants -- see Figure~\ref{f:2}.

\noindent
2. We observe numerically that the coefficient of the linear term is much smaller than 
the coefficient of the quadratic term and that the approximation \eqref{eq:Epm1} is
reasonable for physically relevant coupling $ \lambda \simeq 0.7\alpha$, for
first real magic $ \alpha$ -- see Figure~\ref{f:1}. 

\noindent
3. The situation is different for complex $ \alpha$'s showing that our assumption 
$ \alpha \in \mathcal A \cap \mathbb R$ is necessary to have purely quadratic
behaviour in band splitting -- see Figure~\ref{f:3}. 

\noindent
4. Similar analysis can be performed for double magic $ \alpha $ but to keep things
simple we restrict ourselves to numerical illustrations -- see \S \ref{s:double}.

We now review the definitions needed for the notation of \eqref{eq:defBM} and the theorem.

We recall that general version of $ D ( \alpha ) $ (and the corresponding self adjoint 
Hamiltonian) and of $ C $: 
\begin{equation}
\label{eq:Hamiltonian}
 D(\alpha) = \begin{pmatrix} 2D_{\bar z }& \alpha U (z ) \\ \alpha U(- z) & 2 D_{\bar z } \end{pmatrix}, \ \ \ \
  C := \begin{pmatrix} 0 & V ( z )  \\
V  ( -z ) & 0 \end{pmatrix} , 
\end{equation}
where the parameter $\alpha$ is proportional to the inverse relative twisting angle. 
With $\omega = e^{2\pi i /3}$ and  $ K := \frac 4 3 \pi $,  we assume that 
\begin{equation}
\label{eq:defU}
U(z  + \gamma ) = e^{ i \langle \gamma, K \rangle } U  (z ), \ 
\gamma \in \Lambda ,  \quad U  (\omega z ) = \omega U  (z ), \ \ 
 \overline{U( \bar z ) } = - U ( - z ), \end{equation}
$ \Lambda := \mathbb Z \oplus \omega \mathbb Z $, and 
 \begin{equation}
 \label{eq:defV}  V (  z ) = V ( \bar z ) = \overline{ V ( - z ) } ,  \ \ V ( \omega z ) = V ( z ) , 
 \ \ V ( z + \gamma ) = e^{ i \langle \gamma, K \rangle } V ( z) . 
 \end{equation}

The specific potentials (see \cite{phys} and references given there) and the ones 
we use in numerical experiments are 
\begin{equation}
\label{eq:defUV}  U ( z ) =  - \tfrac{4} 3 \pi i \sum_{ \ell = 0 }^2 \omega^\ell e^{ i \langle z , \omega^\ell K \rangle },
\ \ V( z) = \sum_{ \ell = 0 }^2  e^{ i \langle z , \omega^\ell K \rangle }, \ \
 K = \tfrac43 \pi  .
 \end{equation}

\begin{figure}
\includegraphics[width=7.3cm]{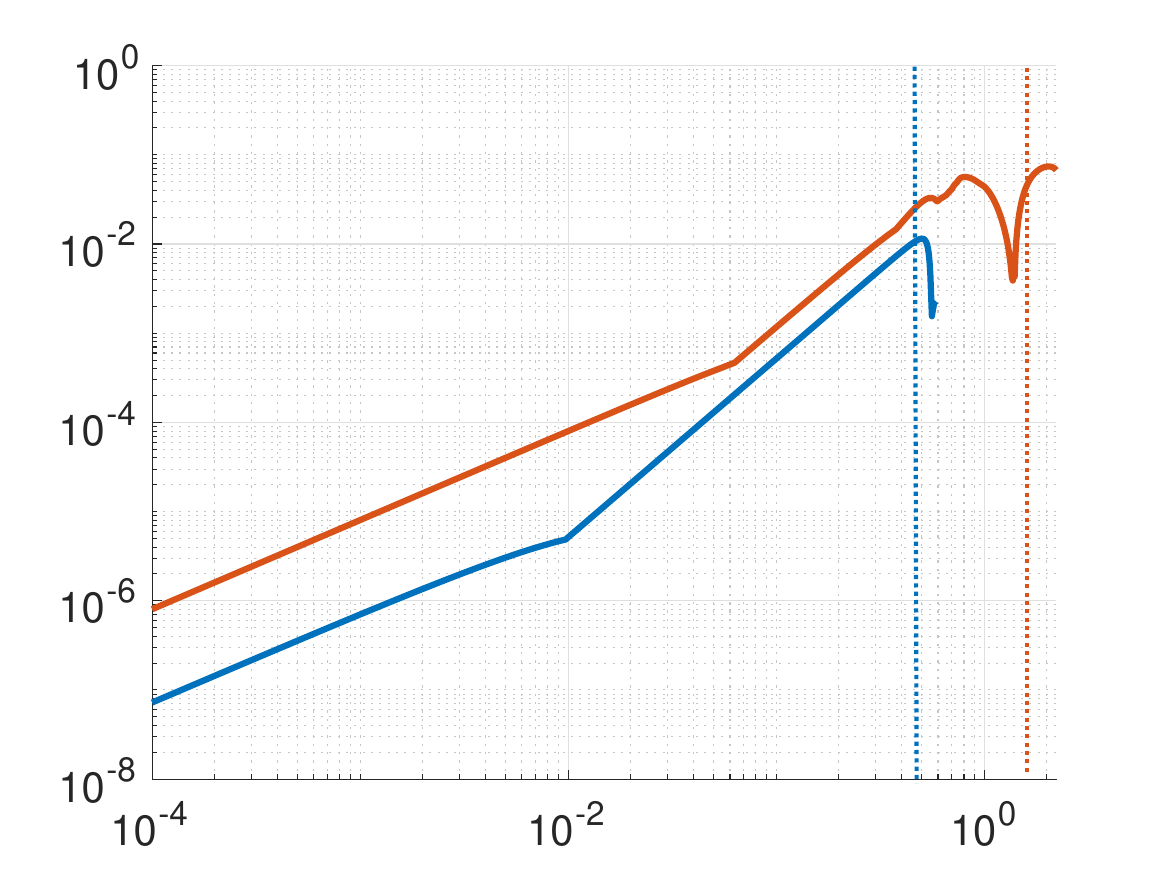} \hspace{0.25cm}
\includegraphics[width=7.3cm]{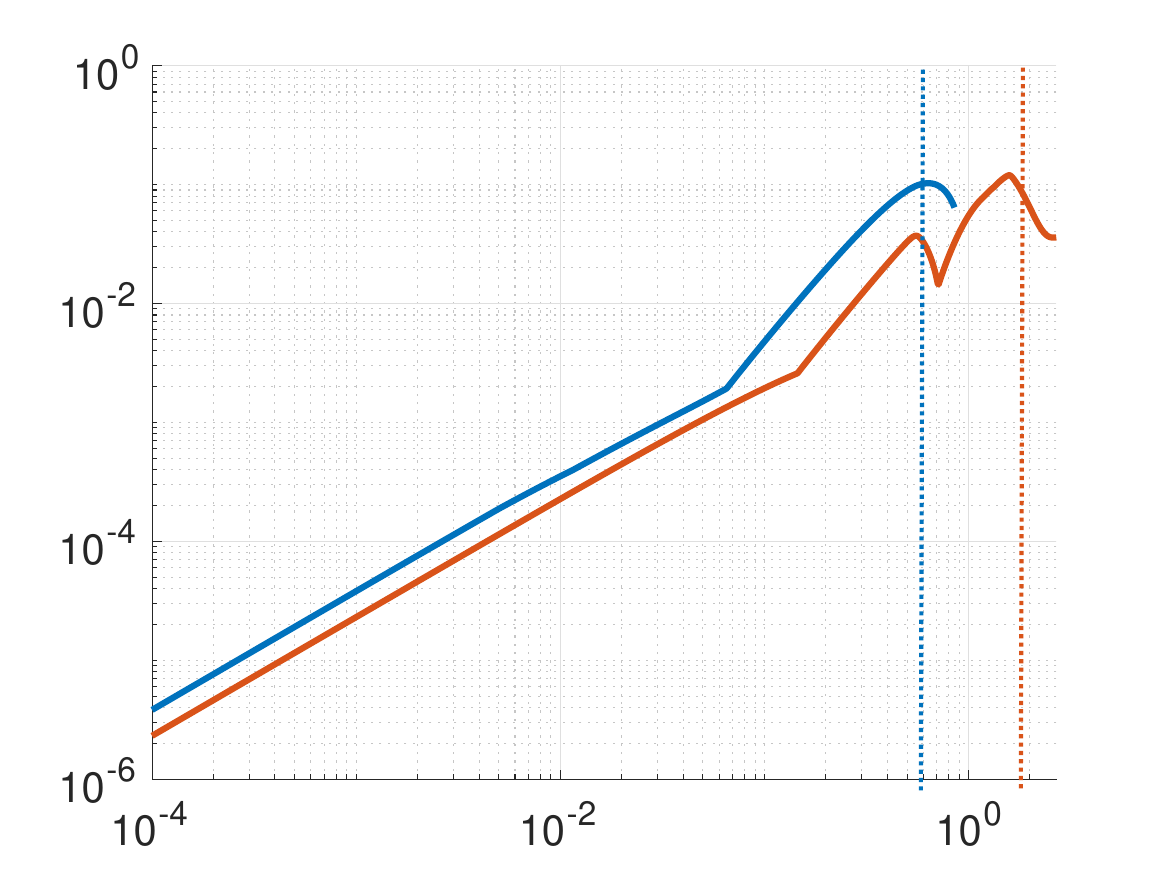}\hspace{0.25cm} 
\caption{\label{f:1} On the left a log-log plot of $ \lambda \mapsto 
\max_k | E_1 ( \alpha, \lambda , k ) | $ which shows the transition from the 
linear to quadratic dominant behaviour for the 
first two magic  $ \alpha \simeq \{ {\color{blue}0.586}, \,  {\color{red}2.221}\};$
on the right a log-log plot of $ \lambda \mapsto 
| E_1 ( \alpha, \lambda , 0 ) |$ for two {\em degenerate} magic angles $ \alpha \simeq \{ {\color{blue}0.8538}, \,  {\color{red}2.701}\}$. The choice $\lambda \approx 0.7 \alpha$ is indicated by vertical dashed lines. The potentials used on the left are given in
\eqref{eq:defUV} and correspond to the standard BMH. 
On the right we use $ U_1 ( z ) :=   ( U ( z) - U ( 2z ) )/\sqrt 2 $ and $ V ( z ) $. 
That potential exhibits double magic $ \alpha$'s on the real axis -- see \cite{bhz3}.}
\end{figure}

\begin{figure}
\includegraphics[width=7.3cm]{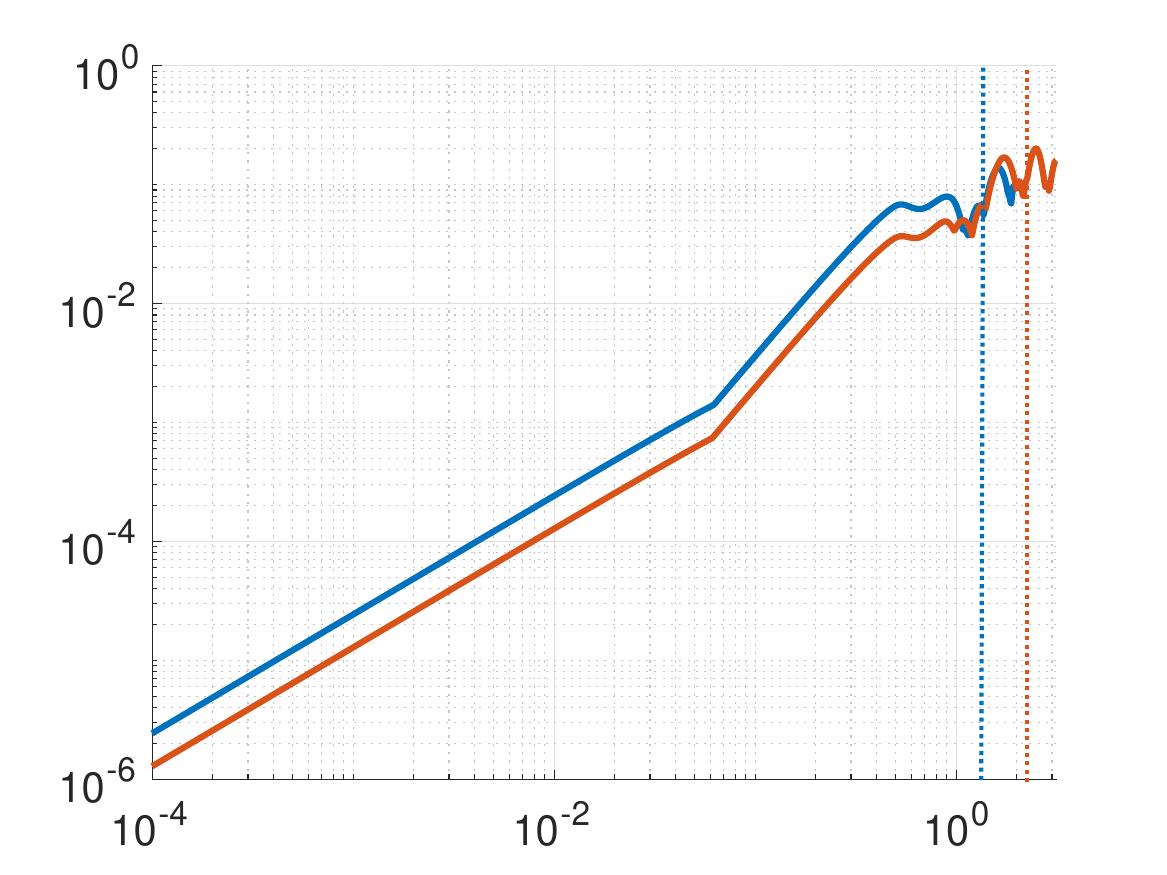} \hspace{0.25cm}
\includegraphics[width=7.3cm]{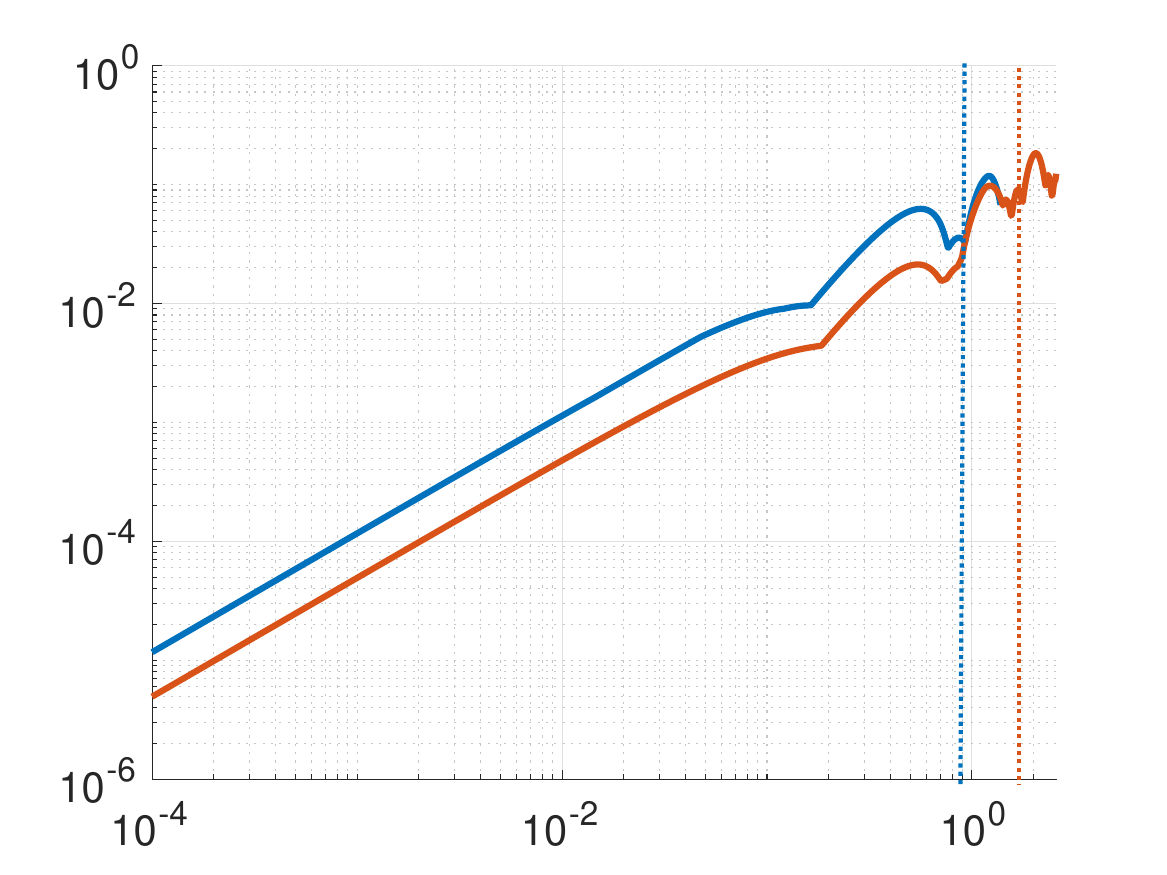}
\caption{\label{f:3} Left: log-log plot of $\lambda \mapsto 
\max_k | E_1 ( \alpha, \lambda , k ) | $ 
which shows the simple complex magic angles $ \alpha \simeq \{{\color{blue}1.121 + 1.57i}, \, {\color{red}1.312 + 2.862i}\}$;
Right: log-log plot of $ \lambda \mapsto \max_k
| E_1 ( \alpha, \lambda , k ) |$ for two {\em two-fold degenerate} complex magic
$ \alpha \simeq \{{\color{blue}0.963 + 0.987i}, \, {\color{red}1.226 + 2.268i}\} $ both with $U_1$ and $V$ as in Figure \ref{f:1}. Vertical lines at $ |\lambda/\alpha| = 0.7$}
 
\end{figure}

We conclude this introduction by describing the relation between $ H ( \lambda, \alpha ) $
given in \eqref{eq:defBM} and the representation of that Hamiltonian in the physics 
literature \cite{BM11}, \cite{magic}. 
The Bistritzer-MacDonald Hamiltonian for twisted bilayer graphene reads 
 \begin{equation}
 \label{eq:mathscrH}
 \mathscr H(\beta,\zeta)  = \begin{pmatrix} -i\boldsymbol{\sigma}\cdot\nabla & T(\beta,\zeta)  \\  
T(\beta,\zeta) ^\dag & -i\boldsymbol{\sigma}\cdot \nabla \end{pmatrix},
\end{equation}
where we defined $\boldsymbol{\sigma}= (\sigma_x,\sigma_y)$, $\mathbf r = (x,y)$, 
$ \boldsymbol{\sigma}\cdot\nabla = \sigma_x \partial_x + \sigma_y \partial_y $ ($ \sigma_\bullet $
are the Pauli matrices) 
 and interlayer tunnelling matrix\[T(\beta,\zeta \mathbf r) = \begin{pmatrix}\tilde \beta V(\zeta \mathbf r) &\beta \overline{U(-\zeta \mathbf r)} \\  \beta U(\zeta \mathbf r) & \tilde \beta V(\zeta \mathbf r) \end{pmatrix}.\]
The honeycomb graphene lattice has a non-equivalent pair of atoms in its fundamental domain labelled by $A,B$ respectively with $A',B'$ used for the second layer in case of TBG.
In the matrix $T$, the potentials $U$ and $V$ facilitating $AB'$ tunnelling and $AA'/BB'$ respectively, see Figure \ref{fig:tunnelling}, are given for $q_i = R^i (0,-1)$ with $2\pi/3$ rotation matrix $R = \frac{1}{2}\begin{pmatrix} -1 & -\sqrt{3} \\ \sqrt{3} & -1 \end{pmatrix}$ by
\[ 
\begin{gathered} U(\mathbf r) = \sum_{i=0}^2 \omega^i e^{-i q_i\cdot \mathbf r}, \ \ 
V(\mathbf r) = \sum_{i=0}^2  e^{-i q_i\cdot \mathbf r}, \ \
 q_i : = R^i  \begin{pmatrix} \ \ 0 \\ -1\end{pmatrix} , \ \  R := \tfrac{1}{2}\begin{pmatrix} -1 & -\sqrt{3} \\ \sqrt{3} & -1 \end{pmatrix}. 
\end{gathered} 
\]
($ R $ is the $ 2 \pi /3 $ rotation matrix.)

Let $K = \operatorname{diag}(1,\sigma_x,1)$ then upon making the change of variables $\zeta \mathbf r \mapsto \mathbf r $, $\lambda = \beta /\zeta$, and $\alpha = \beta /\zeta$, we find
\[ K  \mathscr H(\beta,\zeta) K = \zeta \begin{pmatrix} \lambda C & D(\alpha)^{\dagger} \\  D(\alpha) & \lambda C \end{pmatrix}\]
where 
\[\begin{split} 
D(\alpha) &= \begin{pmatrix}D_{x} + i D_{y} & \alpha U(\mathbf r) \\ \alpha U(-\mathbf r) & D_{x}+i D_{y}\end{pmatrix} \text{ and }
C =\begin{pmatrix}0 & V(\mathbf r) \\ V(-\mathbf r) & 0 \end{pmatrix}. 
\end{split}\] 
The physics coordinates $\mathbf r = (x,y)$ are related to our Hamiltonian \eqref{eq:defBM} by the change of coordinates
\[ x + i y=\tfrac{4}{3}\pi i z, \quad \tfrac{1}{3}\Gamma=\tfrac{4}{3}\pi i \Lambda, \quad 3\Gamma^*=\frac{3}{4\pi i }\Lambda^*\]
where $\tfrac{1}{3}\Gamma$ is the moir\'e lattice and $ \Gamma $ is the lattice of periodicity of 
the potentials $U ( \mathbf r ) $ and $ V ( \mathbf r ) $. (In the figures we use coordinates based on $(x,y)$
and the corresponding $ k$ coordinates.)

\noindent
{\bf Notation.} In this paper we use the physics notation: for an
operator $ A $ on $ L^2 ( M, dm )  $, $ \langle u | A | v \rangle := \int_M Av \, \bar u \, dm $.
Also, $ | u \rangle $ denotes the operator $ \mathbb C \ni \mu \to \mu u \in L^2 $ and
$ \langle u | $, its adjoint $ L^2 \ni v \to \langle u | v \rangle \in \mathbb C$. 
For $ z, w \in \mathbb C \simeq \mathbb R^2 $, we use 
the real inner product, $ \langle z , w \rangle := \Re z \bar w$. 

\noindent
{\bf Acknowledgements.} MZ gratefully acknowledges partial support by the NSF grant DMS-1901462
and by the Simons Foundation under a ``Moir\'e Materials Magic" grant. 

\begin{figure}
\includegraphics[width=5.3cm]{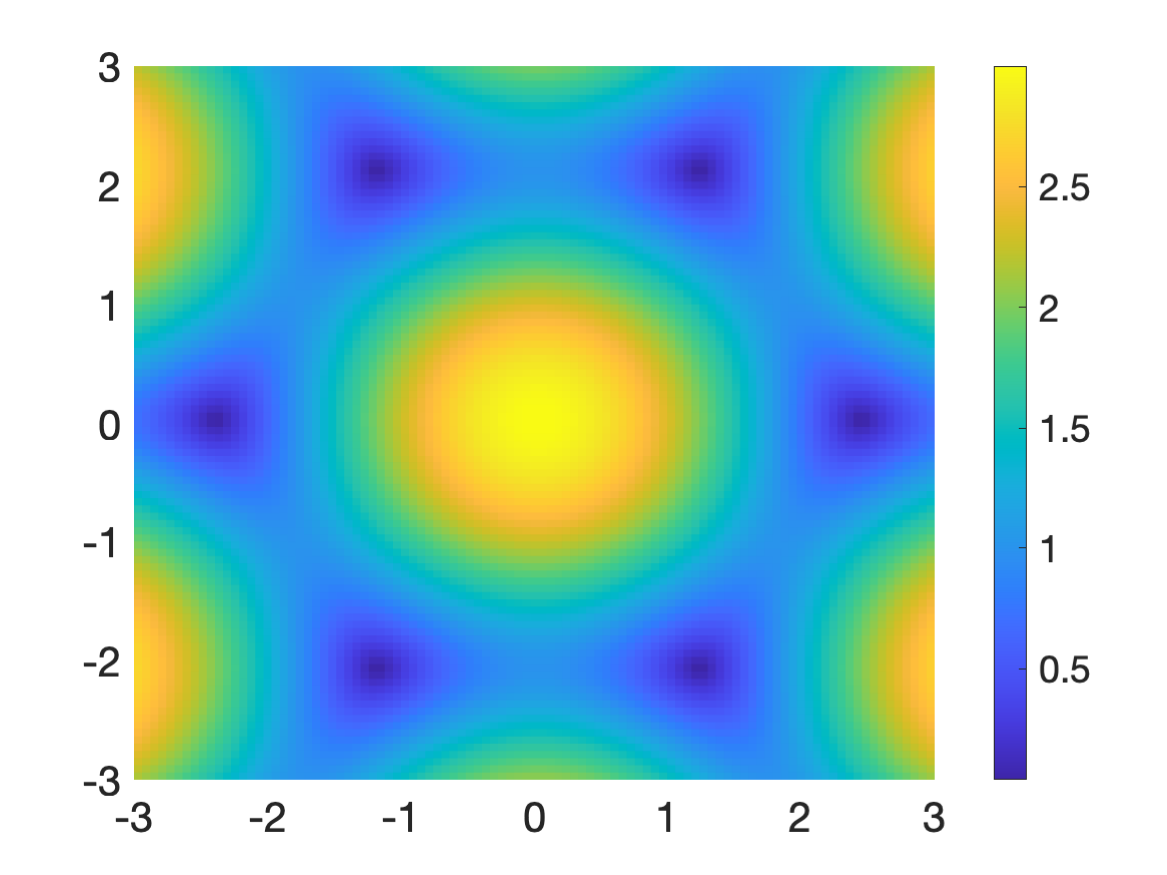}
\includegraphics[width=5.3cm]{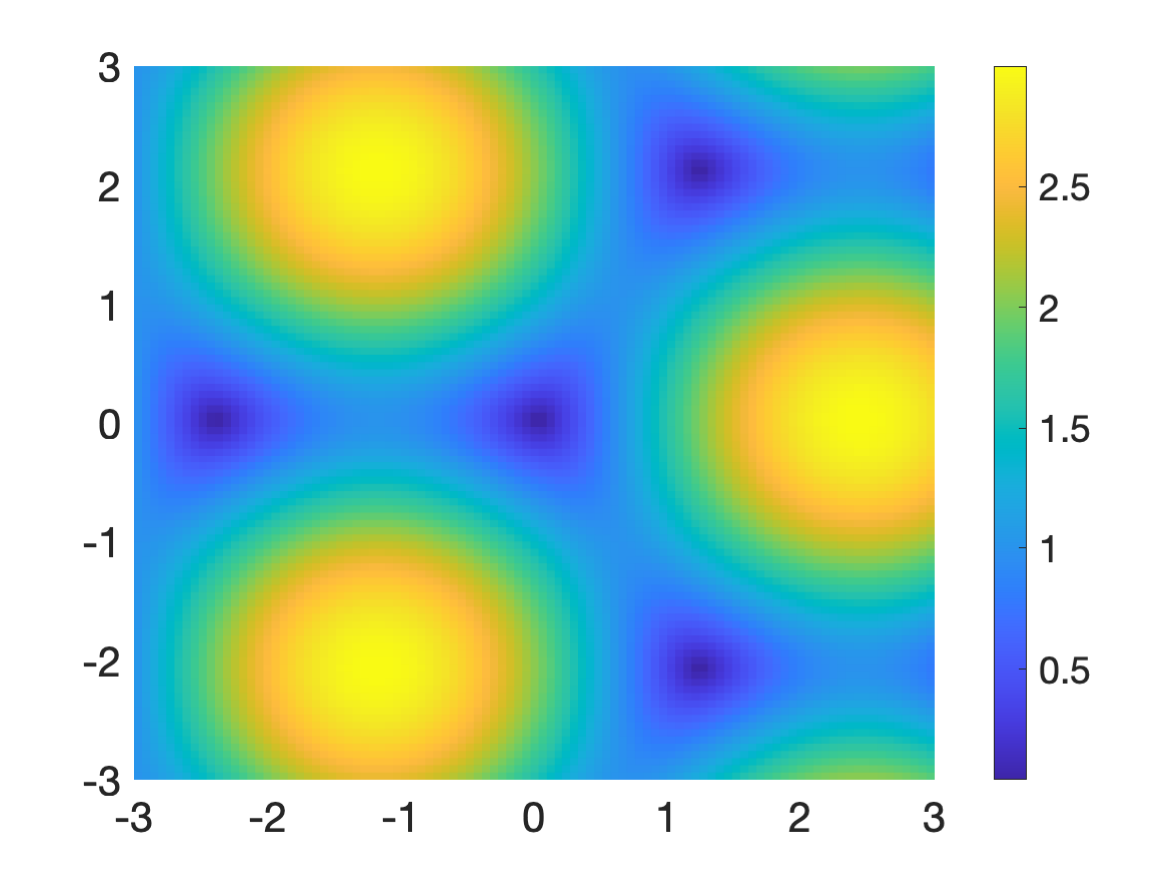}
\includegraphics[width=4.5cm]{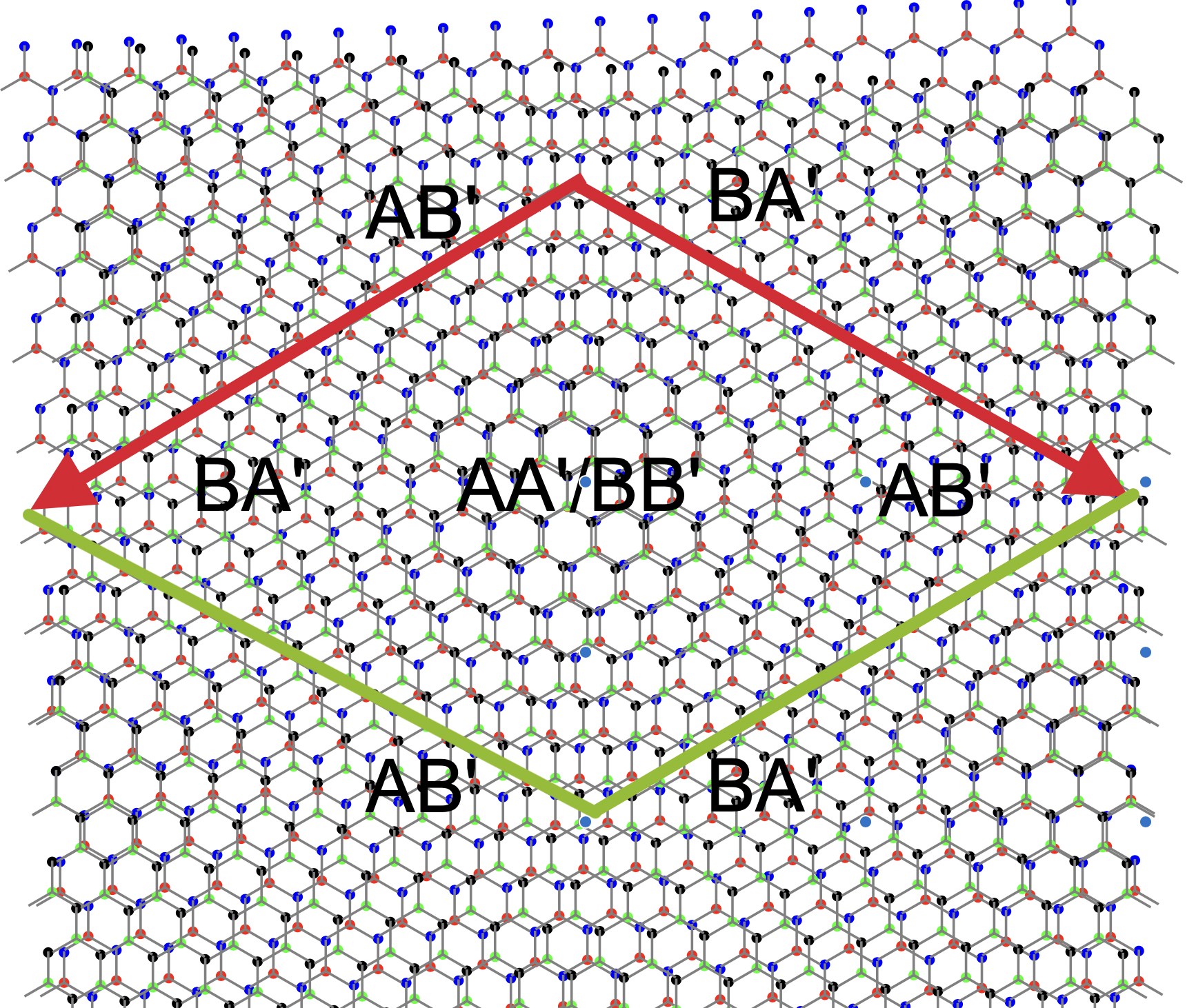}
\caption{\label{fig:tunnelling}$\vert V(\mathbf r)\vert$ (left) peaked at $AA'/BB'$ regions, $\vert U(\mathbf r)\vert$ (center) concentrated at $AB'$ regions, and figure of moir\'e fundamental cell with regions of particle-type overlap (right).}
\end{figure}

\section{Symmetries of BMH}

In this section we review the symmetries of BMH given in \eqref{eq:defBM} and the
spaces used in the definition of the bands $ k \mapsto E_{\pm \ell } ( \alpha, \lambda, k ) $, 
$ \ell \in \mathbb Z \setminus \{ 0 \} $. We follow the presentation of \cite[Supplementary Materials]{phys},
with modifications due to the variables used here, see \cite[Appendix A]{bhz2}. 

\subsection{Translation and rotation symmetries}
The translation symmetry for $ u \in L^2_{\rm{loc}} ( \mathbb C ; \mathbb C^2 ) $ is given by 
\begin{equation}
\label{eq:defLag}  
L_{\gamma } u  :=  
\begin{pmatrix} e^{ i \langle \gamma, K \rangle }   & 0  \\
0 & e^{-  i \langle \gamma , K \rangle } 
\end{pmatrix}  u ( z + \gamma )   ,   \ \ \ 
\gamma \in \Lambda,  \ \  K = \tfrac43 \pi .  \end{equation}
We extend this action diagonally for $ w \in L^2_{\rm{loc}} ( \mathbb C ; \mathbb C^4 ) $:
\[  \mathscr L_\gamma w = \begin{pmatrix} L_\gamma w_1 \\L_\gamma w_2 \end{pmatrix}, \ \ \ 
w = \begin{pmatrix} w_1 \\ w_2 \end{pmatrix}, \ \ w_j \in L^2_{\rm{loc} } ( \mathbb C; \mathbb C^2 ). \]
We then have have, in the notation of \eqref{eq:defBM}, \eqref{eq:Hamiltonian} with $ U $, $ V$
satisfying \eqref{eq:defU} and \eqref{eq:defV}, 
$ L_\gamma D ( \alpha ) = D ( \alpha ) L_\gamma $, and 
$  \mathscr L_\gamma H ( \alpha, \lambda  ) = H ( \alpha ) \mathscr L_\gamma $.
We also define
\begin{equation}
\label{eq:OC} 
\begin{gathered} 
 \Omega : L^2_{\rm{loc}} ( \mathbb C; \mathbb C^2) \to L^2_{\rm{loc}} ( \mathbb C; \mathbb C^2) , \ \ \ 
 \mathscr C :  L^2_{\rm{loc}} ( \mathbb C; \mathbb C^4) \to L^2_{\rm{loc}} ( \mathbb C; \mathbb C^4) ,\\
 \Omega u ( z ) := u ( \omega z ) , \ \  \mathscr C \begin{pmatrix} w_1 \\ w_2 \end{pmatrix} :=
 \begin{pmatrix} \Omega w_1 \\ \bar \omega \Omega w_2 \end{pmatrix}, \end{gathered} \end{equation} 
 so that
 $ \Omega D( \alpha ) = \omega D ( \alpha ) $ and $\mathscr C H( \alpha, \lambda ) = 
 H ( \alpha, \lambda ) \mathscr C $. 
 
 The natural subspaces of $ L^2_{\rm{loc}} ( \mathbb C ; \mathbb C^p ) $, $ p = 2, 4 $, are given by
 \begin{equation}
 \label{eq:Lk} 
 \begin{gathered}
 L^2_k ( \mathbb C ; \mathbb C^4 ) := \{ u \in L^2_{\rm{loc} } ( \mathbb C, \mathbb C^4 ) : 
 \mathscr L_\gamma u = e^{ i \langle k , \gamma\rangle } u \}, 
  \end{gathered}
 \end{equation}
with $  L^2_k ( \mathbb C ; \mathbb C^2 )$ defined by replacing $ \mathscr L_\gamma $ by 
$ L_\gamma$. When the context is clear we simply write $ L^2_k $ for either space. We note
that these spaces depend only on the congruence class of $ k $ in $ \mathbb C/\Lambda^* $, 
\[ \Lambda^* := \frac{ 4 \pi i }{\sqrt 3 } \Lambda , \ \ \   k \mapsto z ( k ) :=\frac{ \sqrt{3}k }{ 4 \pi i } ,
\ \ \ \Lambda^* \to \Lambda, \ \  \langle p , \gamma \rangle \in 2 \pi \mathbb Z,  \ \ p\in \Lambda^*, \ 
\gamma \in \Lambda . \]
The points of high symmetry, $ \mathcal K $, are defined by demanding that
$ \omega p \equiv p \! \mod \! \Lambda^*$. They are given by 
$ \mathcal K = \{ K, - K , 0 \} + \Lambda^* $, $ K = \frac 4 3 \pi $ (see \eqref{eq:defUV}, \eqref{eq:defLag}).
For $ k \in \mathcal K/\Lambda^* $ and $ p \in \mathbb Z_3 $ we also define
\begin{equation}
\label{eq:Lkp} 
L_{k,p}^2 ( \mathbb C ; \mathbb C^4 ) := \{ 
u \in L^2_{k} ( \mathbb C; \mathbb C^4 ) :  \mathscr C^p u = \bar \omega^p u \} , 
\end{equation}
with the definition of $ L_{k,p}^2 ( \mathbb C ; \mathbb C^2 )$ obtained by replacing $\mathscr C $
by $ \Omega $. We have orthogonal decompositions $ L^2_k = \bigoplus_{ p \in \mathbb Z_3}
L_{k,p} $, $ k \in \mathcal K/\Lambda^*$. Also, the actions of
$ \mathscr L_\gamma $ and $ \mathscr C $ on $ L^2_{p,k} $ commute. (In general, 
 $ \mathscr L_\gamma \mathscr C = \mathscr C \mathscr L_{\omega \gamma}$.)

As recalled in \S \ref{s:int}, the Floquet spectrum is obtained by taking the spectrum of
\begin{equation}
\label{eq:defHk}  H_k ( \alpha, \lambda ) := 
e^{ -i \langle z, k \rangle} H ( \alpha, \lambda ) e^{ i \langle z, k \rangle } = 
\begin{pmatrix} \lambda C & D(\alpha) ^* + \bar k \\
D( \alpha) +k & \lambda C \end{pmatrix} , \end{equation}
on $ L^2_0 $ with the domain given by $ H^1_{\rm{loc}} \cap L^2_0 $. 

The spaces $ L^2_{k,p} $ will be crucial in showing that $ f ( \alpha , \pm K ) = 0 $ in \eqref{eq:Epm1}.
That, and the proof of \eqref{eq:prope}, requires additional symmetries, well known in the
physics literature.
 
\subsection{Additional symmetries}

We start with the symmetries of $ D( \alpha ) $, a non-self-adjoint building block of BMH \eqref{eq:Hamiltonian} for $ U $ satisfying the conditions in \eqref{eq:defU}.
We start with 
\begin{equation}
\label{eq:defE}
\mathscr E D ( \alpha ) \mathscr E^{-1}  = - D( \alpha ) , \ \ \mathscr Ev(z):=Jv(-z), \quad 
J:=\begin{pmatrix} \ \ 0 & 1 \\ -1 & 0 \end{pmatrix}, \ \
\mathscr E : L^2_{0} \to L^2_{0} . 
 \end{equation}
If $ \alpha \in \mathcal A $ is simple then the kernel, $ \mathbb C u_0 $, of $ D (\alpha ) $ on $ L^2_0 $ is one dimensional
(see \cite[Theorem 2]{bhz2}) and, as the spectrum of $ \mathscr E  $ given by 
$ \{ \pm i \} $ ($ \mathscr E^2 = - I $), $ \mathscr E u_0 = \pm i 
 u_0 $ (for one choice of sign). Hence, 
\begin{equation}
\label{eq:phi2psi}     u_0 ( z ) = \begin{pmatrix} \psi ( z ) \\
\pm i \psi ( - z )\end{pmatrix} , \ \ \ \psi ( z + \gamma ) = e^{ -i \langle \gamma, K \rangle } \psi ( z ), \ \ 
\gamma \in \Lambda. 
\end{equation}
(This observation about the Bloch wave function associated to the $ \Gamma $ point
was made in -- see the discussion before \cite[(26)]{magic}.)
Interestingly,  numerical evidence suggests that the signs alternate, with $ - $ at the first real magic angle, then $+$, and so on. For future reference we also recall from \cite[Proposition 3.6]{dynip} that
$ u_0 \in L^2_{0,2} $, that is
\begin{equation}
\label{eq:rotpsi}     \psi ( \omega z ) = \omega \psi ( z ) , \ \ \psi ( z ) = z \psi_1 (z ), \ \ \psi_1 ( 0 ) \neq 0  \text{ with } \psi_1 \in C^{\omega}(\CC;\CC).
\end{equation}

We next recall an anti-linear symmetry,
\begin{equation}
 \label{eq:defQ} {\mathscr Q}v(z) := \overline{ v(-z)}
 , \ \ \ \ {\mathscr Q}(D(\alpha) + k){\mathscr Q} = (D(\alpha) + k)^*, 
 \end{equation}
and two linear symmetries,
\begin{equation}
\label{eq:defH}  \mathscr H \begin{pmatrix} u_1 ( z )\\u_2 ( z ) \end{pmatrix}  :=  \begin{pmatrix} 
- i u_2 ( - z ) \\ i u_1 ( - z) \end{pmatrix} , \ \ \ \   \mathscr H ( D ( \alpha ) + k )  \mathscr H = - ( D( \alpha ) 
- k ) , \end{equation}
(We have, in the notation of \eqref{eq:defE}, $ i \mathscr H = \mathscr E $; the factor $ i $ is there
for $ \mathscr H^2 = I$.)
\begin{equation}
\label{eq:defN}
\mathscr N u ( z ) := \begin{pmatrix} u_2 ( - \bar z ) \\ u_1 ( - \bar z ) \end{pmatrix}, \ \ \ \
\mathscr N ( D ( \alpha ) + k ) \mathscr N =  - ( D ( \bar \alpha ) - \bar k )^* .
\end{equation}

All of these operators satisfy, $ A^2 = I $ and $ A : L^2_0 \to L^2_0 $. Remarkably, for potentials in \eqref{eq:Hamiltonian} satisfying \eqref{eq:defU} and \eqref{eq:defV} these
symmetries give symmetries of BMH (acting 
on $ L^2_{\rm{loc}} ( \mathbb C; \mathbb C^4 ) $). They have the following quaint physical names:
\begin{equation}
\label{eq:quaint}
\begin{aligned} 
& \text{\bf $ \mathcal P\mathcal T $/$C_{2z}T$ symmetry:} \ \ \ &  \mathcal P \mathcal T := 
\begin{pmatrix} 0 & \mathscr Q \\ \mathscr Q & 0 \end{pmatrix}, \\
& \text{\bf particle-hole symmetry:} \ \ \ &  \mathscr S := \begin{pmatrix} \mathscr H & 0 \\
0 & \mathscr H \end{pmatrix}, 
\\ & \text{\bf mirror symmetry:} \ \ \  &  \mathscr M := \begin{pmatrix} \ 0 & i \mathscr N \\
-i \mathscr N & 0 \end{pmatrix} .
\end{aligned} 
\end{equation}
These operators all satisfy $ A^2 = I$ but the mapping properties are more complicated,
see \eqref{eq:mapp} below. 
The commuting components of the $ \mathcal P \mathcal T $ symmetry are given by 
the parity-inversion and time-reversal operators:
\[ \mathcal P  u := \begin{pmatrix} 0 & I_{\CC^2} \\
\ I_{\CC^2} & 0 \end{pmatrix}  u ( - z ) , \ \ \ \ \mathcal T  u ( z ) := \overline {u ( z )} .\]
We also remark that the antilinear symmetry $ \mathscr A $ which holds for more general $ D ( \alpha ) $
is given by $ \mathscr A = - i \mathscr H \mathscr Q $, see \cite[\S 6.1]{bhz3} and \cite[\S 3.5]{dynip}.

We summarize the basic properties of these symmetries in the following
\begin{prop}
\label{p:sym}
With the definitions given in \eqref{eq:defBM},\eqref{eq:quaint} and for potentials
satisfying \eqref{eq:defU}, \eqref{eq:defV},  we have, for $ \alpha \in \mathbb C $, $ \lambda \in \mathbb R $, 
\begin{equation}
\label{eq:comm}
\begin{gathered} 
\mathcal P \mathcal T H ( \alpha, \lambda ) = H (  \alpha, \lambda ) \mathcal P \mathcal T , \ \ \ \  
\mathscr M H ( \alpha, \lambda) = H ( \bar \alpha, \lambda ) \mathscr M, \\
\mathscr S H ( \alpha, \lambda ) = - H ( \alpha, \lambda ) \mathscr S.
\end{gathered} 
\end{equation}
In addition, in the notation of \eqref{eq:Lkp}, and for $ k = \{0, \pm K\} + \Lambda^*$, $ p \in \mathbb Z_3$, 
$ K = \frac43 \pi $, 
\begin{equation}
\label{eq:mapp}
\mathcal P \mathcal T : L^2_{k,p} \to L^2_{k,-p+1}, \ \ 
\mathscr S : L^2_{k,p} \to L^2_{k+K,p}, \ \ \mathscr M : L^2_{k,p} \to L^2_{-k, 1-p}, 
\end{equation}
where  $L^2_{\bullet, \bullet} = L^2_{\bullet, \bullet} ( \mathbb C; \mathbb C^4 ) $.
\end{prop}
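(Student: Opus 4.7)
The proposition has two components---the commutation identities \eqref{eq:comm} and the Floquet-space mapping properties \eqref{eq:mapp}---which I would tackle using the same general strategy: reduce each to identities on the building blocks $\mathscr Q, \mathscr H, \mathscr N$ that make up the symmetries in \eqref{eq:quaint}.

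For \eqref{eq:comm}, the plan is a direct block computation. Writing
\[ H(\alpha,\lambda) = \begin{pmatrix} \lambda C & D(\alpha)^* \\ D(\alpha) & \lambda C \end{pmatrix}, \]
conjugating by each of $\mathcal{PT}, \mathscr S, \mathscr M$ in the block form \eqref{eq:quaint}, and matching entries reduces the question to (i) the $D$-block conjugation rules \eqref{eq:defQ}, \eqref{eq:defH}, \eqref{eq:defN} (with $k=0$) and their adjoints, and (ii) analogous $C$-block rules. The latter I would extract directly from \eqref{eq:defV}: the reality $\overline{V(-z)} = V(z)$ gives $\mathscr Q C = C \mathscr Q$; the reality $V(z) = V(\bar z)$ gives $\mathscr N C = C \mathscr N$; and the antidiagonal structure of $C$ combined with the swap structure of $\mathscr H$ gives $\mathscr H C = -C \mathscr H$. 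The only subtle feature is that the $\mathscr N$ rule \eqref{eq:defN} exchanges $\alpha$ with $\bar\alpha$, which produces the $H(\bar\alpha,\lambda)$ on the right-hand side of the $\mathscr M$ identity.

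For \eqref{eq:mapp}, I would analyze each symmetry by separately computing its interaction with the translation $\mathscr L_\gamma$ (which determines the Bloch-momentum image) and with the rotation $\mathscr C$ (which determines the $p$-index image). For $\mathcal{PT}$, antilinearity of $\mathscr Q$ flips the phase under translation, giving $\mathscr L_\gamma \mathcal{PT} = \mathcal{PT} \mathscr L_{-\gamma}$ and hence $\mathcal{PT} : L^2_k \to L^2_k$; combining $\Omega \mathscr Q = \mathscr Q \Omega$ with the interlaced $\bar\omega$ factor in $\mathscr C$ gives $\mathscr C (\mathcal{PT} u) = \omega^{p-1} \mathcal{PT} u$ whenever $\mathscr C u = \bar\omega^p u$, yielding the $p \mapsto 1 - p$ shift. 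Analogous direct substitutions, using the sublattice-swap structure of $\mathscr H, \mathscr N$ and (for $\mathscr M$) the invariance of $\Lambda$ under conjugation $\gamma \mapsto \bar\gamma$, produce the remaining two mappings; the restriction $k \in \mathcal K/\Lambda^*$ together with the fact that $3K \in \Lambda^*$ is used to identify the various images inside $\mathcal K$.

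The principal obstacle is the careful bookkeeping of phases in the translation step: the sublattice-swap structure of $\mathscr H$ and $\mathscr N$ interacts with the diagonal $e^{\pm i\langle \gamma, K\rangle}$ factors inside $L_\gamma$ in a way where small sign or conjugation errors propagate into the wrong Bloch momentum, so it is essential to track which components pick up the $+K$ versus $-K$ factor under each swap. Once those momentum shifts are pinned down, the $p$-index identities reduce to elementary computations in $\mathbb Z/3\mathbb Z$.
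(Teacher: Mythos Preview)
Your proposal is correct and follows essentially the same route as the paper's proof: block-matrix computations reducing \eqref{eq:comm} to the building-block identities $\mathscr Q C \mathscr Q = C$, $\mathscr N C \mathscr N = C$, $\mathscr H C \mathscr H = -C$ (plus the already-stated $D$-identities), and then commutation relations of each symmetry with $\mathscr L_\gamma$ and $\mathscr C$ to read off the images in \eqref{eq:mapp}. The paper records exactly the intermediate identities you describe (its \eqref{eq:comL} and \eqref{eq:comC}), and your handling of the antilinearity of $\mathcal P\mathcal T$ in the $p$-index computation agrees with theirs.
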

\begin{proof}
To establish \eqref{eq:comm} we start with the $ \mathcal P \mathcal T $ symmetry where in view
of \eqref{eq:defQ} we have 
\[ \begin{pmatrix} 0 & \mathscr Q \\ \mathscr Q & 0 \end{pmatrix} 
\begin{pmatrix} 
\lambda C & D ( \alpha)^* \\ D ( \alpha ) & \lambda C \end{pmatrix} = 
\begin{pmatrix}  \lambda \mathscr Q  C \mathscr Q &  D ( \alpha)^* \\
 D ( \alpha ) & \lambda \mathscr Q  C \mathscr Q \end{pmatrix} \begin{pmatrix} 0 & \mathscr Q \\ \mathscr Q & 0 \end{pmatrix}.
 \]
Since $ \overline{ V ( -z )} = V ( z ) $ (see \eqref{eq:defV}) we have $ \mathscr Q C \mathscr Q = C $. 

For the mirror symmetry in \eqref{eq:quaint} we use \eqref{eq:defN} to see that
\[ \begin{pmatrix} \ 0 &  i \mathscr N \\
-i \mathscr N & 0 \end{pmatrix} 
\begin{pmatrix} 
\lambda C & D ( \alpha)^* \\ D ( \alpha ) & \lambda C \end{pmatrix} = 
 \begin{pmatrix}  - \mathscr \lambda \mathscr N  C  \mathscr N  &\ \  D ( \bar \alpha)^*  \\
D ( \bar \alpha ) & - \lambda \mathscr N  C \mathscr N  \end{pmatrix} \begin{pmatrix} \ 0 &  i \mathscr N \\
-i \mathscr N & 0 \end{pmatrix} .  \]
From $ V ( \bar z ) = V ( z ) $ (see \eqref{eq:defV}, we conclude
$ \mathscr N  C \mathscr N = C $, proving $ \mathscr M H ( \alpha, \lambda ) = 
H ( \bar \alpha, \lambda ) \mathscr M $. 
Finally, $  \mathscr S H ( \alpha, \lambda ) = - 
H ( \bar \alpha, \lambda ) \mathscr S $ follows from \eqref{eq:defH} and 
$ \mathscr H C \mathscr H = - C $. 

To obtain \eqref{eq:mapp}, we first note that 
 $ L_\gamma  \mathscr Q  =  \mathscr Q L_{-\gamma} $, 
$ L_\gamma \mathscr H = e^{ i \langle \gamma, K \rangle} \mathscr H L_{-\gamma } $ and
$ L_\gamma \mathscr N =  L_{-\bar \gamma} \mathscr N$. Hence, 
\begin{equation}
\label{eq:comL}  \mathscr L_\gamma \mathcal P \mathcal T  =  \mathcal P \mathcal T \mathscr L_{-\gamma }, \ \ \ \
\mathscr L_\gamma \mathscr S = e^{ i \langle K, \gamma \rangle } \mathscr S \mathscr L_\gamma,  \ \ \ \
\mathscr L_\gamma  \mathscr M = 
\mathscr M \mathscr L_{-\bar \gamma} .\end{equation}
In the notation of \eqref{eq:OC}, $ \Omega \mathscr Q = \mathscr Q \Omega $, $ \Omega \mathscr H = \Omega  \mathscr H $ and $ \Omega \mathscr N = \mathscr N \Omega^* $ and thus
\begin{equation}
\label{eq:comC}
\mathscr C \mathcal P \mathcal T = \bar \omega \mathcal P T \mathscr C, \ \ \ \ 
\mathscr C \mathscr S = \mathscr S \mathscr C, \ \ \ \
\mathscr C \mathscr M = \bar \omega \mathscr M \mathscr C^* . 
\end{equation}
We can now check the mapping properties \eqref{eq:mapp}: for $ u \in L^2_{p,k} $, 
\[ \mathscr L_\gamma \mathscr C  \mathcal P \mathcal T u = 
\bar \omega \mathcal P \mathcal T \mathscr C \mathscr L_{-\gamma } u =
\bar \omega \mathcal P \mathcal T (\bar \omega^p e^{ - i \langle \gamma, k \rangle } u )
= e^{ i \langle \gamma , k \rangle } \bar \omega^{1-p} \mathcal P T u , \]
that is, $ \mathcal P \mathcal T u \in L^2_{k,1-p} $. Then
\[ \mathscr L_\gamma \mathscr C  \mathscr S  u = 
e^{ i \langle K, \gamma\rangle } \mathscr S \mathscr C \mathscr L_\gamma u = 
e^{ i \langle K +k , \gamma \rangle } \bar \omega^p \mathscr S u , \]
that is $ \mathscr S u \in L^2_{k+K, p} $. Finally, 
\[ \mathscr L_\gamma \mathscr C  \mathscr M  u = 
\bar \omega \mathscr M \mathscr C^* \mathscr L_{-\bar \gamma} u = 
\bar \omega^{1-p}  e^{ - i \langle \gamma, k \rangle } \mathscr M u, \]
(since $ k = \pm K ,  0 $, $ \langle \gamma, k \rangle = 
\langle \bar \gamma, k \rangle$) which means that $ \mathscr M u \in L^2_{-k,1-p} $. This completes the proof of 
\eqref{eq:mapp}. 
\end{proof}

\subsection{Protected states of BMH}

We now use a modification of the argument from \cite{phys} to obtain symmetry 
protected states for $ H ( \alpha, \lambda ) $. In the case of the chiral limit $ H ( \alpha, 0) $, 
the same conclusion is easier and is based on the more immediate symmetry of
eigenvalues of $ H ( \alpha, 0 ) $ about $ 0 $ \cite{magic,beta}.

\begin{prop}
\label{p:prote}
In the notation of \eqref{eq:defHk} and for all $ \alpha, \lambda \in \mathbb R $, 
\begin{equation}
\label{eq:prote}
\dim \ker_{ L^2_0 }  H_{\pm K} ( \alpha , \lambda )  \geq 2 , \ \ \ K := \tfrac43 \pi,
\end{equation}
$ \omega K \! \equiv K \!\! \mod \! \Lambda^*$.
\end{prop}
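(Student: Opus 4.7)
The plan is to reduce the proof to a mod-two invariance argument applied to a single $C_3$-isotypic component of $L^2_K$. By Proposition~\ref{p:sym}, the composition $J := \mathscr M \mathscr S$ is a linear operator on $L^2_K$ satisfying $\{J, H(\alpha,\lambda)\} = 0$ for $\alpha\in\mathbb R$, and a direct calculation from \eqref{eq:defH} and \eqref{eq:defN} gives $\mathscr N\mathscr H = -\mathscr H\mathscr N$, whence $\mathscr M\mathscr S = -\mathscr S\mathscr M$ and $J^2 = -I$. By the same proposition, both $J$ and $\mathcal P\mathcal T$ send $L^2_{K,p}$ into $L^2_{K,1-p}$, and an analogous short computation from the explicit forms of $\mathscr Q$, $\mathscr N$, $\mathscr H$ yields $\{J, \mathcal P\mathcal T\} = 0$. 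Consequently the composition
\[
  \Theta := J\,\mathcal P\mathcal T : L^2_{K,0}\longrightarrow L^2_{K,0}
\]
is antilinear, satisfies $\Theta^2 = I$, and anticommutes with $H(\alpha,\lambda)$ for every $\alpha,\lambda\in\mathbb R$.

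The existence of such a $\Theta$ forces the spectrum of $H(\alpha,\lambda)|_{L^2_{K,0}}$ to be symmetric about zero: if $v$ is an eigenvector with real eigenvalue $E$, then $\Theta v$ is one with eigenvalue $-E$. Setting
\[
  m_0(\alpha,\lambda) := \dim\ker \bigl(H(\alpha,\lambda)|_{L^2_{K,0}}\bigr),
  \quad
  N_\varepsilon(\alpha,\lambda) := \#\{\text{eigenvalues of } H|_{L^2_{K,0}} \text{ in } (-\varepsilon,\varepsilon)\},
\]
the symmetry gives $m_0 \equiv N_\varepsilon \pmod 2$ for any small $\varepsilon > 0$. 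Since $H_K(\alpha,\lambda)$ is elliptic on the compact moir\'e torus, its eigenvalues vary continuously in $(\alpha,\lambda)$, and the $\Theta$-pairing forces them to enter or leave $(-\varepsilon,\varepsilon)$ in symmetric $\{E,-E\}$ pairs. Hence $N_\varepsilon$ changes only by even integers along any path in the connected set $\mathbb R^2$, and the parity of $m_0$ is globally constant.

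The remaining step is to check $m_0(0,0) = 1$. At $\alpha=\lambda=0$, the operator $H(0,0)$ is block-off-diagonal on $L^2_K(\mathbb C;\mathbb C^4)$ with blocks $2D_z I_2$ and $2D_{\bar z}I_2$, so its kernel splits as $\ker 2D_{\bar z}\oplus\ker 2D_z$ on $L^2_K(\mathbb C;\mathbb C^2)$. In either summand the first scalar component is $\Lambda$-periodic and (anti)holomorphic, hence constant by Liouville's theorem; the second is (anti)holomorphic with $u(z+\gamma)=e^{2i\langle\gamma,K\rangle}u(z)$, a unit-modulus non-trivial multiplier, which combined with the Liouville argument forces it to vanish. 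One obtains a two-dimensional kernel spanned by $v_+ = (1,0,0,0)^T$ and $v_- = (0,0,1,0)^T$, and direct computation gives $\mathscr C v_+ = v_+$, $\mathscr C v_- = \bar\omega v_-$. Thus $v_+\in L^2_{K,0}$ and $v_-\in L^2_{K,1}$, so $m_0(0,0) = 1$ is odd.

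Combining the last two steps yields $m_0(\alpha,\lambda)\geq 1$ on all of $\mathbb R^2$. Since $\mathcal P\mathcal T$ restricts to an antilinear bijection $\ker H|_{L^2_{K,0}}\to\ker H|_{L^2_{K,1}}$, the dimension of the latter is also at least $1$, and the $C_3$-decomposition $L^2_K = L^2_{K,0}\oplus L^2_{K,1}\oplus L^2_{K,2}$ gives $\dim\ker_{L^2_K} H(\alpha,\lambda)\geq 2$. The bound at $-K$ follows from the mirror symmetry $\mathscr M: L^2_K\to L^2_{-K}$, which commutes with $H$ for $\alpha\in\mathbb R$ and is involutive, hence restricts to a linear isomorphism of kernels. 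I expect the main obstacle to be the base-case computation at $(\alpha,\lambda)=(0,0)$: one must carefully track the $L^2_K$ automorphy factors and carry out the theta-function-type analysis to rule out spurious zero modes in $L^2_{K,2}$, although this is of the same flavour as arguments already performed in \cite{bhz2,dynip}.
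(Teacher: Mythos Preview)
Your argument is correct and is essentially the paper's proof: your operator $\Theta = J\,\mathcal P\mathcal T = \mathscr M\mathscr S\,\mathcal P\mathcal T$ is exactly the paper's $U_-$, and the paper runs the same parity argument on each $L^2_{\pm K,p}$ using $U_\pm := \mathscr S\mathscr M\,\mathcal P\mathcal T$ and $\mathscr M\mathscr S\,\mathcal P\mathcal T$, whereas you do it once on $L^2_{K,0}$ and then transfer to $L^2_{K,1}$ and $L^2_{-K}$ via $\mathcal P\mathcal T$ and $\mathscr M$. Your closing caveat about $L^2_{K,2}$ is unnecessary: you have already computed the full kernel at $(0,0)$ and placed both vectors in $L^2_{K,0}$ and $L^2_{K,1}$, and in any case the bound $\geq 2$ needs nothing about the $p=2$ component.
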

\begin{proof}
Since $ H_{\pm K } ( \alpha, \lambda ) = e^{\mp i \langle z , K \rangle} H ( \alpha , \lambda ) e^{\pm  i \langle 
z , k \rangle } $, and multiplication by $ e^{ \pm i \langle z, K \rangle } $ takes $ L^2_0 $ to $ L^2_{\pm K} $,
\eqref{eq:prote} is equivalent to $ \dim \ker_{L^2_{\pm K } } H ( \alpha, \lambda ) \geq 2 $, for 
all real $ \alpha $ and $ \lambda $. 
We now observe that Proposition \ref{p:prote} gives 
\begin{equation}
\label{eq:Upm} \begin{split}  & U_+ H ( \alpha, \lambda ) 
U_+^{-1}  = - H ( \alpha, \lambda ) , \ \ \ U_+ :=  \mathscr S \mathscr M \mathcal P \mathcal T, \\
& 
U_- H ( \alpha, \lambda ) U_-^{-1}  
= - H ( \alpha, \lambda ) , \ \ \ U_- := \mathscr M \mathscr S  \mathcal P \mathcal T .
\end{split}  \end{equation}
Also
\begin{equation}
\label{eq:Upm1} U_\pm : L^2_{ \pm K, p } \to L^2_{\pm K, p } .\end{equation}
In fact, from \eqref{eq:mapp} we see that
\[ \begin{split} & L^2 _{K,p} \xrightarrow{ \mathscr S } L^2_{2K,p} \xrightarrow{ \mathscr M } L^2_{-2K,1-p}  \xrightarrow{ \mathcal P \mathcal T } L^2_{-2K,p} = L^2_{K,p} \\
& L^2_{-K,p} \xrightarrow{ \mathscr M } L^2_{K,1-p} \xrightarrow{ \mathscr S } L^2_{2K,1-p}  \xrightarrow{ \mathcal P \mathcal T } L^2_{2K,p} = L^2_{-K,p} .
\end{split} 
\]
The mapping property \eqref{eq:Upm1} and \eqref{eq:Upm} show that 
\begin{equation}
\label{eq:evens} \Spec_{ L^2_{\pm K, p} } H( \alpha, \lambda )  = -  \Spec_{ L^2_{\pm K, p} } H( \alpha, \lambda ). \end{equation}
If $ \{ \mathbf e_j \}_{j=1}^4 $ is the standard basis of $ \mathbb C^4 $ then 
 we easily check that
\begin{gather*}  \ker_{ L^2_{ K, 0 } } H ( 0, 0 ) = \mathbb C \mathbf e_1 , 
\ \ \ \  \ker_{ L^2_{ -K, 0 } } H ( 0, 0 ) = \mathbb C \mathbf e_2 ,\\ 
 \ker_{ L^2_{ K, 1 } } H ( 0, 0 ) = \mathbb C \mathbf e_3 ,\ \ \ \ 
  \ker_{ L^2_{ -K, 1 } } H ( 0, 0 ) = \mathbb C \mathbf e_4 ,
  \end{gather*} 
 and in particular all these spaces are one dimensional.
 In view of \eqref{eq:evens}, these dimensions have to remain odd for all $ \alpha, \lambda \in 
 \mathbb R $ and that shows that $ \dim \ker_{L^2_{\pm K } } H ( \alpha, \lambda ) \geq 2 $, proving
 \eqref{eq:prote}. 
\end{proof}

\section{Perturbation theory} 
\label{s:pt}

Suppose that $ \alpha \in \mathcal A \cap \mathbb R $ is 
simple (see \S \ref{s:int}). Then by \cite[Theorem 2]{bhz2},  $ \ker ( D ( \alpha)  + k ) = \mathbb C u( k ) $,
$ u ( k ) \neq 0 $ for all $ k \in \mathbb C $. We also have $ \ker ( D^*( \alpha )  + \bar k ) = \mathbb C u^* ( k ) $ 
and the fact that the kernels are nonempty for all $ k $ implies that
$\langle u ( k ) | u^* ( k ) \rangle = 0$. 
(Otherwise perturbation theory, as in the Grushin problem below, would give us invertibility of
 $  D ( \alpha ) + k $.) We normalize $ u(k) , u^*(k)  $ to have norm one.
To simplify notation we temporarily drop $ \alpha $ in $ D ( \alpha ) $ and consider $ H_k ( \alpha, 0 ) $, a self-adjoint Hamiltonian corresponding to the chiral limit:
 \[  H ( k )   := \begin{pmatrix} 0 & D^* + \bar k  \\
 D + k & 0 \end{pmatrix}.  \] 
We then define
\[  \begin{gathered} \mathcal H ( k ) := \begin{pmatrix} H ( k ) & R_- ( k ) \\
R_+ ( k ) & 0 \end{pmatrix} ,   \ \ \ R_- ( k ) : \mathbb C^2 \to L^2_0 , \ \ \ 
R_+ ( k ) = R_- ( k )^* : L^2_0 \to \mathbb C^2 , \\
  R_- ( k ) := \begin{pmatrix} 0 & | u ( k ) \rangle \\
| u^* ( k ) \rangle & 0 \end{pmatrix} \ \ \ 
R_+ ( k ) = \begin{pmatrix} 0 & \langle u^* ( k ) | \\
\langle u ( k )|  & 0 \end{pmatrix} , \end{gathered} \]
so that 
\begin{equation}
\label{eq:Einv}
\begin{gathered}
\mathcal  H (k ) ^{-1} = \begin{pmatrix} E ( k ) & E_+ ( k ) \\
E_- ( k ) & E_{-+} ( k ) \end{pmatrix} , \ \  E_+ ( k ) = R_- ( k ) , \ \ E_- ( k ) = R_+ ( k )\\
E ( k )  := \begin{pmatrix} 0 & E_0(k) ^* \\ E_0(k) & 0 \end{pmatrix}, \ \ \ E_{-+ } ( k ) \equiv 0 , \\
E_0( k ) := 
\left( ( D (\alpha ) + k )|_{  |u( k )\rangle^\perp \to |  u^*(k)\rangle^\perp} \right)^{-1}
( I -  | u^* ( k ) \rangle \langle u^* ( k )|  ) . \end{gathered} \end{equation} 
We label the eigenvalues of $ H(k)  $ as $ E_{\ell} ( k )  $, $ \ell \in \mathbb Z \setminus \{ 0 \} $, $ E_{\ell} ( k)  = - E_{-\ell} ( k )  $. Under our assumptions we have $ E_{\pm 1 } ( k )  = 0 $, $ E_{\ell } ( k) \neq 0$ $ |\ell|> 1$, see \cite[Theorem 2]{bhz2}.
We now consider the BMH \eqref{eq:defBM} as a perturbation of $ H ( k ) $:
\[ H( k , \lambda ) := \begin{pmatrix} \lambda C & D^* + \bar k \\
D+k & \lambda C \end{pmatrix} , \ \ C = C^* . \]
For $ |\lambda| \ll 1  $, the two smallest (in modulus) eigenvalues of $ H ( k, \lambda ) $, $ E_{\pm 1 } ( k , \lambda ) $, are given by 
 the eigenvalues of 
\[ E_{-+} ( k, \lambda ) = - \sum_{\ell=0}^\infty (-1)^\ell  \lambda^{\ell+1} E_-(k) C(E(k)C)^{\ell}E_{+} ( k ) : \mathbb C^2 \to \mathbb C^2 , \]
see \cite[Proposition 2.12]{notes} for this standard fact. 
Using \eqref{eq:Einv} we then 
obtain
\begin{equation}
\label{eq:Eexp}  \begin{gathered} E_{-+} ( k , \lambda ) = 
\begin{pmatrix} \lambda e_{+} ( k ) & \lambda^2 f ( k ) \\
 \lambda^2 \overline{f ( k ) } & e_- ( k ) \end{pmatrix}  + \mathcal O ( \lambda^3)_{ \mathbb C^2 \to 
\mathbb C^2 } , \\
e_+  ( k ) = - \langle u ( k ) | C | u ( k ) \rangle , \ \ \ 
e_-  ( k ) = - \langle u^* ( k ) | C | u^* ( k ) \rangle , 
\\ f ( k ) = \langle u ( k ) | \bar V E_0(k)^* \bar V | u ( k ) \rangle . 
\end{gathered}
\end{equation}

We should stress (see \cite[\S 5.2]{bhz2}) that $ u ( k ) $ cannot be made continuous in $ k $ and 
$ \Lambda^* $-periodic (even modulo a $k$-independent unitary transformation, 
see \eqref{eq:tau} below).  However $ E_{-+} ( k , \lambda ) $ is periodic. We present a quick 
argument relevant to \eqref{eq:Epm1}. 

\begin{lemm}
\label{l:per}
In the notation of \eqref{eq:Eexp}, $ e_\pm, f \in C^\infty ( \mathbb C ) $ and
\begin{equation}
\label{eq:per}
e_\pm ( k + p ) = e_\pm ( k ) , \ \  f ( k + p ) = f ( k ) , \ \ \ k \in \mathbb C, \ \ p \in \Lambda^* .
\end{equation}
\end{lemm}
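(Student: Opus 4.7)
My plan is to prove both the smoothness and periodicity by introducing a single unitary, the multiplication operator $\tau_p u(z) := e^{-i\langle z, p\rangle} u(z)$ for $p \in \Lambda^*$, which conjugates the Floquet-shifted operator back onto itself, and then exploiting the fact that $e_\pm$ and $f$ are \emph{gauge-invariant} quadratic forms in the spectral projections.

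First I would check that $\tau_p : L^2_0 \to L^2_0$ is a well-defined unitary: this uses the pairing $\langle \gamma, p \rangle \in 2\pi \ZZ$ for $\gamma \in \Lambda$, $p \in \Lambda^*$, so that $e^{-i\langle z,p\rangle}$ is $\Lambda$-periodic and commutes with $\mathscr L_\gamma$. A direct computation using $\partial_{\bar z}\langle z, p\rangle = p/2$ yields
\[
 \tau_p^{-1}(2D_{\bar z})\tau_p = 2D_{\bar z} - p,
\]
and since $\tau_p$ commutes with all multiplication operators (in particular with $U(\pm z)$, $V(\pm z)$, $\bar V$, and $C$), it follows that $\tau_p^{-1}(D(\alpha) + k + p)\tau_p = D(\alpha) + k$ and likewise for the adjoint. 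Hence $\tau_p$ maps $\ker(D(\alpha)+k)$ bijectively onto $\ker(D(\alpha)+k+p)$, so $u(k+p) = c(p,k)\tau_p u(k)$ and $u^*(k+p) = c^*(p,k)\tau_p u^*(k)$ for some unimodular phases.

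Next I would invoke gauge-invariance. The quantities $e_\pm(k) = -\tr(C\,\Pi(k))$ and $e_-$ with $\Pi(k) := |u(k)\rangle\langle u(k)|$ (resp. $|u^*(k)\rangle\langle u^*(k)|$) depend only on the rank-one spectral projection, which is independent of the phase of $u(k)$. Similarly, $E_0(k)$ from \eqref{eq:Einv} is defined via the two projections and the restricted inverse, so it too is gauge-invariant, and therefore $f(k) = \tr(\bar V E_0(k)^* \bar V\, \Pi(k))$. Conjugating each ingredient by $\tau_p$ and using its commutation with $C, V, \bar V$ together with $\tau_p^{-1} E_0(k+p)\tau_p = E_0(k)$ (which follows from the intertwining above together with the transformation of the projections), the phases $c, c^*$ drop out, and we obtain $e_\pm(k+p) = e_\pm(k)$ and $f(k+p) = f(k)$.

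For smoothness in $k$, since $\alpha \in \mathcal A$ is simple, \cite[Theorem 2]{bhz2} supplies a uniform spectral gap between $\{0\}$ and the rest of $\Spec H(k)$ on $L^2_0$, so the Riesz projection $\Pi_0(k) := -\tfrac{1}{2\pi i}\oint_{|w|=\delta} (w - H(k))^{-1}\,dw$ depends smoothly (indeed real-analytically) on $k$, and likewise $E_0(k)$ is smooth as it is assembled from $(w-H(k))^{-1}$ by contour integrals. Combined with the gauge-invariant expressions above, this gives $e_\pm, f \in C^\infty(\CC)$. The one subtlety — that a globally smooth choice of $u(k)$ on $\CC$ may be obstructed by a Berry phase — is precisely what the gauge-invariance rewriting circumvents; this is the only place where one has to be careful. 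Combined with the periodicity already established, the proof of \eqref{eq:per} is complete.
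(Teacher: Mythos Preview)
Your argument is correct and follows essentially the same route as the paper: both introduce the multiplication unitary $\tau_p$ (your $\tau_p$ is the paper's $\tau(p)^*$), establish the intertwining $\tau_p^{-1}(D(\alpha)+k+p)\tau_p=D(\alpha)+k$, deduce that $u(k+p)$ and $u^*(k+p)$ differ from $\tau_p u(k)$, $\tau_p u^*(k)$ only by unimodular phases, and then observe that these phases drop out of $e_\pm$ and $f$ because $C,\bar V$ commute with $\tau_p$.

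The only genuine difference is in the smoothness argument. The paper simply points to the explicit theta-function formula \eqref{eq:uke} to get a globally smooth $k\mapsto u(k)$ on $\mathbb C$, whereas you phrase $e_\pm,f$ as gauge-invariant traces built from the rank-one projections and $E_0(k)$, and then invoke the Riesz projection and the uniform spectral gap to get smoothness. Your route is more robust (it does not rely on the theta construction) and makes the phase-independence transparent; the paper's route is shorter given that \eqref{eq:uke} is already available. One small correction: on the contractible domain $\mathbb C$ there is no topological obstruction to a globally smooth $u(k)$; the Berry-phase obstruction you allude to is to a $\Lambda^*$-\emph{periodic} smooth choice, which is indeed what your gauge-invariant rewriting bypasses. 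Also, your claim that $E_0(k)$ is ``assembled from $(w-H(k))^{-1}$ by contour integrals'' is a slight shortcut---more precisely, $E(k)$ is the reduced resolvent $(I-\Pi_0(k))H(k)^{+}(I-\Pi_0(k))$, whose smoothness follows from that of $\Pi_0(k)$ and the gap---but the conclusion stands.
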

\begin{proof}
We recall that
\begin{equation}
\label{eq:tau}
\begin{gathered} 
\tau ( p )^* ( D ( \alpha ) + k ) \tau ( p ) = D ( \alpha ) + k + p , \\ 
\tau ( p ) v ( z ) := e^{ i \langle z, p \rangle } v ( z ) ,  \ \  p \in \Lambda^* , \ \ 
\tau ( p ) : H^s ( \mathbb C/\Lambda ) \to H^s ( \mathbb C/\Lambda ), \ \ s \in \mathbb R . 
\end{gathered} 
\end{equation}
We know (see \eqref{eq:uke} for an explicit formula) that $ k \mapsto u ( k) $ is a
smooth function of $ k \in \mathbb C $.
Simplicity of $ \alpha $ then shows that
\[  | u ( k + p ) \rangle = \rho_p ( k ) \tau ( p )^* u ( k ) \rangle, \ \ \ | \rho_p ( k ) | = 1, 
\ \ \ k \in \mathbb C , \ \ p \in \Lambda^* . 
 \]
(We have an explicit formula for $ \rho_p ( k ) $ coming from \eqref{eq:uke}, see
\cite[\S 5]{bhz2}, but that is of no importance here.) This and \eqref{eq:tau} also 
shows that $ \tau ( p )^* : | u ( k ) \rangle^ \perp \to | u ( k + p ) \rangle ^\perp $, 
$ \tau(p) : | u^*( k )\rangle^\perp \to  | u^* ( k + p ) \rangle ^\perp $ and hence, 
in the notation of \eqref{eq:Einv}, 
\[ \begin{gathered}   \langle u ( k + p ) | = \overline{ \rho_p ( k )} \langle u ( k ) | \tau ( p ) |, \ \ \
| u ( k + p ) \rangle = \rho_p ( k ) | \tau ( p )^* | u ( k ) \rangle , \\
 E_0 ( k + p ) = \tau ( p ) E_0 (k ) \tau ( p)^* ,  \ \ \ 
 k \in \mathbb C ,   \  p \in \Lambda^* . \end{gathered} \]
Since $ C $ and $ \bar V $ are multiplication operators which commute with $ \tau ( p ) $ 
and $ \tau( p )^* $, \eqref{eq:per} follows.
\end{proof} 

We also have
\begin{lemm}
\label{l:p2m}
In the notation of \eqref{eq:Eexp} we have 
\[  e_+ ( k ) = e_- ( k ) =: e ( k ) . \]
\end{lemm}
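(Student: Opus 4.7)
The plan is to exploit the antilinear symmetry $\mathscr{Q}$ from \eqref{eq:defQ}, which is the two-component component of the $\mathcal{P}\mathcal{T}$ symmetry of $H(\alpha,\lambda)$. Since $\mathscr{Q}(D(\alpha)+k)\mathscr{Q} = (D(\alpha)+k)^*$ and $\mathscr{Q}^2 = I$, conjugation by $\mathscr{Q}$ maps $\ker_{L^2_0}(D(\alpha)+k)$ into $\ker_{L^2_0}(D(\alpha)^*+\bar k)$; the fact that $\mathscr{Q}$ preserves $L^2_0$ follows from $L_\gamma\mathscr{Q}=\mathscr{Q}L_{-\gamma}$ in \eqref{eq:comL}. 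By the simplicity assumption on $\alpha$, both kernels are one-dimensional, and since $\mathscr{Q}$ is an anti-unitary involution (in particular $\|\mathscr{Q}u(k)\|=\|u(k)\|=1$), one may normalize so that
\begin{equation*}
u^*(k) = c(k)\,\mathscr{Q}\,u(k),\qquad |c(k)|=1.
\end{equation*}

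I would then recall from the proof of Proposition \ref{p:sym} that the identity $\overline{V(-z)}=V(z)$ from \eqref{eq:defV} yields $\mathscr{Q}C\mathscr{Q}=C$, i.e.\ $\mathscr{Q}C=C\mathscr{Q}$. Substituting the expression for $u^*(k)$ into $e_-(k)=-\langle u^*(k)\,|\,C\,|\,u^*(k)\rangle$, the factor $c(k)\overline{c(k)}=1$ cancels and one obtains
\begin{equation*}
e_-(k) \;=\; -\langle \mathscr{Q}u(k)\,|\,C\,|\,\mathscr{Q}u(k)\rangle
\;=\; -\langle \mathscr{Q}u(k)\,|\,\mathscr{Q}\,C\,u(k)\rangle.
\end{equation*}

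Using antiunitarity $\langle \mathscr{Q}\phi\,|\,\mathscr{Q}\psi\rangle=\overline{\langle \phi|\psi\rangle}$, the right-hand side equals $-\overline{\langle u(k)\,|\,C\,|\,u(k)\rangle}$. Since $C$ is self-adjoint (again a direct consequence of $\overline{V(z)}=V(-z)$ in \eqref{eq:defV}), the diagonal expectation $\langle u(k)\,|\,C\,|\,u(k)\rangle$ is real, and hence
\begin{equation*}
e_-(k) \;=\; -\langle u(k)\,|\,C\,|\,u(k)\rangle \;=\; e_+(k).
\end{equation*}
There is no real obstacle here; the proof is essentially bookkeeping once the identification $u^*(k)\propto\mathscr{Q}u(k)$ is made. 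The one point to verify carefully is that $\mathscr{Q}$ is indeed anti-unitary on $L^2_0$ and commutes with $C$, both of which are immediate from the symmetry relations \eqref{eq:defU}--\eqref{eq:defV} already used in Proposition \ref{p:sym}.
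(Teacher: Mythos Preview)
Your argument is correct and follows essentially the same route as the paper: both proofs identify $u^*(k)$ with $\mathscr{Q}u(k)$ (up to a unimodular factor) and then use the relation $\mathscr{Q}C\mathscr{Q}=C$ coming from $\overline{V(-z)}=V(z)$. The only difference is presentational: the paper expands $e_-(k)$ in components and performs the change of variables $z\mapsto -z$ explicitly, whereas you package the same computation as antiunitarity of $\mathscr{Q}$ together with the self-adjointness of $C$, which is slightly cleaner.
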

\begin{proof} 
We calculate
\begin{equation}
\label{eq:ep1}   e_+ ( k )  = - \langle u ( k ) | C | u ( k ) \rangle 
  = 
- \langle u_1 ( k ) | V | u_2 ( k ) \rangle - \langle u_2 ( k ) | \bar V | u_1 ( k ) \rangle .
\end{equation} 
In view of \eqref{eq:defQ} we can take
\[ u^* ( k ) = {\mathscr Q} u ( k ) = 
\begin{pmatrix} \overline{ u_1 (k , - z ) } \\ \overline{u_2 ( k , - z ) } \end{pmatrix} , \]
so that the definition of $ e_- ( k ) $ gives,
(recall from \eqref{eq:defV} that  $V ( - z ) = \overline{V ( z ) } $) 
\[  \begin{split} e_-( k ) & = - \langle u^* ( k ) | C | u^* ( k ) \rangle = 
- \langle u_1^* ( k ) | V | u_2^* ( k ) \rangle - \langle u_2^* ( k ) | \bar V | u_1^* ( k ) \rangle 
\\ & = - \langle \overline{ u_1 ( k , - \bullet ) } | V | \overline { u_2 ( k, - \bullet)} - 
\langle \overline{ u_2 ( k , - \bullet ) } | \bar V | \overline{ u_1 ( k , - \bullet ) } \rangle
 \\ & = - \langle u_2 ( k )  | V ( - \bullet) | u_1 ( k) \rangle - 
  \langle  u_1 ( k  ) | \overline{ V (- \bullet )}  | u_2 ( k ) \rangle \\
  & =  - \langle u_2 ( k )  | \bar V  | u_1 ( k) \rangle - 
  \langle  u_1 ( k  ) | V   | u_2 ( k ) \rangle = e_+ ( k ) ,
\end{split} 
\]
which is the desired conclusion. \end{proof} 

Hence the eigenvalues of \eqref{eq:Eexp} are given by \eqref{eq:Epm1}. From 
Proposition \ref{p:prote} we see that $ E_{\pm } ( \alpha, \lambda, \pm K ) = 0 $
and hence $ e(\pm K)= f ( \pm K ) = 0 $. It remains to establish \eqref{eq:prope}.
This will be done in \S \ref{s:sym}.

We conclude this section with an expression for $ e( k) $ in terms of $ \psi $ 
in \eqref{eq:phi2psi}. To that we recall the notation of \cite{dynip}:
\begin{equation}
\label{eq:defFk}  F_k ( z ) = e^{\frac i 2   (  z -  \bar z ) k } \frac{ \theta ( z - z ( k ) ) }{
\theta( z) } ,  \ \ \ z (k):=  \frac{ \sqrt 3 k }{ 4 \pi i } , \ \  z:  \Lambda^* \to \Lambda ,
\end{equation}
where $ \theta ( \zeta ) = \theta_1 ( \zeta| \omega ) $ is a Jacobi theta function
-- see \cite[\S 2.2]{voca}.
We recall that 
\begin{equation}
\label{eq:propFk}
\begin{gathered}F_k ( z + m + n \omega ) =  F_k ( z ) , \\
  ( 2 D_{\bar z } + k ) F_k ( z ) 
 = a(k)   \delta_0 ( z ) , \ \ a(k) :=  2 \pi i {\theta ( z ( k ) ) }/{ \theta' ( 0 ) } .
\end{gathered}
\end{equation}
We can then use \eqref{eq:phi2psi} and take
\begin{equation}
\label{eq:uke}  u ( k ) = c( k) F_k ( z ) u_0 ( z) =  c( k) F_k ( z ) 
\begin{pmatrix} \ \ \psi ( z ) \\ \pm i \psi ( - z ) \end{pmatrix} , \end{equation}
where $ c ( k ) > 0 $ is the normalizing constant. This gives
\begin{equation}
\label{eq:epk}  \begin{split} e_+ ( k ) & = - \langle u ( k ) | C | u ( k ) \rangle 
 = 
- \langle u_1 ( k ) | V | u_2 ( k ) \rangle - \langle u_2 ( k ) | \bar V | u_1 ( k ) \rangle \\
& = - c(k)^2 \int_{ \mathbb C/\Lambda } ( V ( z) u_2 ( k , z ) \overline{ u_1 ( k , z ) } 
+ \overline{V ( z ) } u_1 ( k, z ) \overline{u_2 ( k , z ) } dm ( z ) \\
 & = \mp i c (k)^2 \int_{\mathbb C/\Lambda } | F_k ( z ) |^2 ( V ( z ) \psi ( -z ) \overline{ \psi ( z ) }
- \overline { V ( z) } \overline { \psi ( -z ) } \psi ( z ) ) dm ( z) \\
& = \pm 2 c ( k )^2 \int_{\mathbb C/\Lambda } | F_k ( z ) |^2 \Im (  V ( z ) \psi ( -z ) \overline{ \psi ( z ) } ) 
dm ( z ) . \end{split} \end{equation}

\section{Symmetries of $ e ( k ) $}
\label{s:sym}

We will now use different symmetries of $ u_0 $ and $ F_k ( z ) $ to obtain
symmetries of $ e (k ) $ listed in \eqref{eq:prope}.

\subsection{The reflection $ k \mapsto - k $ symmetry}
To see that $ e $ is an odd function of $ k$ we use \eqref{eq:defH}: 
from the simplicity assumption $ u (-  k ) = \rho( k ) \mathscr H u ( k ) $, $ | \rho ( k ) | =1 $, and, as $ V ( - \bullet ) = \bar V $, 
\begin{equation}
\label{eq:odd}  \begin{split} e ( - k ) & = 
- \langle u_1 (- k ) | V | u_2 ( - k ) \rangle {\color{green}-}  \langle u_2 ( -k ) | \bar V | u_1 ( - k ) \rangle \\
& = \langle u_2 ( k, - \bullet ) | V | u_1 ( k , -\bullet ) \rangle + \langle u_1( k, - \bullet ) | \bar V | u_2 (  k, 
-\bullet ) \rangle \\
& = \langle u_2 ( k,  ) | V ( -\bullet)  | u_1 ( k  ) \rangle + \langle u_1( k ) | \bar V ( -\bullet )  | u_2 (  k) 
 \rangle \\
& =  \langle u_2 ( k,  ) | \bar V  | u_1 ( k  ) \rangle + \langle u_1( k ) |  V   | u_2 (  k)  \rangle \\
& = - e ( k ) . \end{split} \end{equation} 

\subsection{The rotation $ k \mapsto \omega k $ symmetry}

Using the definition \eqref{eq:defFk}, we consider $ F_k ( \omega z ) $:
\[ a ( k ) \delta_0 ( z) = [( 2 D_{\bar z } + k ) F_k] ( \omega z )  = \omega( ( 2 D_{\bar z }    + 
\bar \omega k  ) [F_k ( \omega z )] .\]
The uniqueness of the fundamental solution of $  2 D_{\bar z } + k $ on the torus shows that
\[   \omega a ( k)^{-1} F_k ( \omega z ) = a ( \bar \omega k )^{-1} F_{\bar \omega k } ( z ) , \]
that is 
\[ F_k ( \omega z ) = 
\frac{ \bar \omega a ( k ) } { a ( \bar \omega k ) } F_{\bar \omega k } ( z ) = 
 \frac{ \bar \omega \theta (  z ( k ) )  }{ \theta (  \bar \omega z (  k ) ) } F_{\bar \omega k } ( z ) . \]

Hence, 
\[ \begin{split} c ( \bar \omega k )^{-2} & = \int_{ \mathbb C/\Lambda } |F_{\bar \omega k } ( z ) |^2 ( |\psi ( z )|^2 + |\psi ( - z )|^2 ) dm ( z ) \\
& = \left| \frac{ \theta ( \bar \omega z ( k ) )  }{ \theta (  z (  k ) ) } \right|^2 
\int_{ \mathbb C/\Lambda } |F_{ k } ( \omega z ) |^2 ( |\psi ( z )|^2 + |\psi ( - z )|^2 ) dm ( z ) 
2= \left| \frac{ \theta ( \bar \omega  z ( k ) )  }{ \theta (  z (  k ) ) } \right|^2  c ( k )^{-2} , \end{split}\]
which implies 
$  c( \omega k ) /{ c ( k ) } =| { \theta ( z (  k ) )} /{ \theta ( \omega  z ( k ) )  }  |$. 
Using \eqref{eq:epk} we obtain
\[ \begin{split}   \pm e_\pm  ( k ) & = 2 c(k)^2 \int_{\mathbb C /\Lambda } | F_k ( \omega z ) |^2 \Im ( V ( \omega z ) 
\overline{ \psi ( \omega z ) } \psi ( - \omega z ) ) dm ( z ) \\
& =  2 c(k)^2 \int_{\mathbb C /\Lambda } | F_k ( \omega z ) |^2 \Im ( V (  z ) 
\overline{ \psi (  z ) } \psi ( -  z ) ) dm ( z ) \\
& 
=  2 c(k)^2 \left| \frac{ \theta ( z (k ) } {\theta ( \bar \omega z ( k ))  } \right|^2
  \int_{\mathbb C /\Lambda } | F_{\bar \omega k } ( z ) |^2 \Im ( V (  z ) 
\overline{ \psi (  z ) } \psi ( -  z ) ) dm ( z ) \\
& = \frac{c ( k )^2 | \theta (  z ( k ) )|^2 } { c ( \bar \omega k )^2 |\theta ( \bar \omega z (k ))|^2 } e (\bar \omega k ) = 
\pm e ( \bar  \omega k ) ,
\end{split} \]
or
\begin{equation}
\label{eq:ebark}
e ( \omega k ) =  e  ( k ) . 
\end{equation}

\subsection{The reflection $ k \mapsto \bar k $ symmetry}
We first claim that 
\begin{equation}
\label{eq:Fbark}
  F_{ - \bar k } ( z ) = 
-    \frac{ \theta ( z ( \bar k ) ) } { \overline{ \theta (z ( k)  )}}  \overline {F_{ k } ( \bar z ) } \ 
\ \text{ and } \ 
c ( k ) = c ( \bar k ) .
\end{equation}
To see the first part, we note that  
\[ ( 2 D_{\bar z } - \bar k )[ \overline{F_{k} ( \bar z ) } ] = 
\overline{ ( - 2 D_z -  k ) [ F_{k } ( \bar z ) ] } = - \overline{[ ( 2 D_{\bar z } +  k ) F_{k} ] ( \bar z ) } =
- \overline{a ( k ) } \delta_0 ( z ), \]
so that \eqref{eq:propFk} and the uniqueness of the Green function give the first part of \eqref{eq:Fbark}.
Hence,
\[ \begin{split}  c ( \bar k )^{-2}  & = \left| \frac{ \theta ( z ( \bar k ) ) } { \theta (z ( k)  )}  \right|^2 
 \int_{ \mathbb C/\Lambda } |F_{k  } (\bar z ) |^2 ( |\psi ( \bar z )|^2 + |\psi ( -\bar  z )|^2 ) dm ( z ) \\
&  = \left| \frac{ \theta ( z ( \bar k ) ) } { \theta (z ( k)  )}  \right|^2  c ( k)^{-2} . 
\end{split} \]
The second part of \eqref{eq:Fbark} then follows from the following fact:
\begin{equation}
\label{eq:theta1}
\theta ( z )  = e^{ - i \pi /4 } \, \overline{\theta (  \bar z )  } .
\end{equation}
In fact, we can use the product representation of $ \theta $ (see \cite[\S I.14]{tata} or \cite[(2.10a)]{voca}):
\[  \theta ( z ) = 2 q^{\frac14} \sin \pi z \prod_{n=1}^\infty ( 1 - q^{2n} ) ( 1 - q^{2n} e^{ 2 \pi i z } )
( 1 - q^{2n} e^{ -2 \pi i z } ),  \ \ \ q = e^{ \pi i \omega }  , \ \ \bar q = - q . \]
Since
\[  q^{\frac 14} = e^{ \frac14 \pi i \omega }, \ \ \  \overline{ q^{\frac14} } = 
e^{ - \frac14 i \bar \omega } = e^{ - \frac14 \pi i } e^{ \frac 14 i \omega } = e^{ -\frac14 \pi i } q , \]
we have 
\[ \overline{\theta ( \bar z )} = 
 2 (- q)^{\frac14} \sin \pi z \prod_{n=1}^\infty ( 1 - q^{2n} ) ( 1 - q^{2n} e^{ - 2 \pi i z } )
( 1 - q^{2n} e^{ 2 \pi i z}  ) = e^{ - \frac14 \pi i } \theta ( z ) , 
\] 
as claimed in \eqref{eq:theta1}.

Assuming that $ \alpha \in \mathcal A \cap \mathbb R $ is simple we can take
$  u^* ( k  ) = \mathscr N u ( - \bar k ) $ in the definition of $ e ( k ) = e_- ( k )$ in \eqref{eq:Eexp}: 
This means that 
\[ \begin{split} 
 u_1^* ( k , z ) & = u_2 ( - \bar k , - \bar z ) = \pm i  F_{- \bar k } ( - \bar z ) \psi (  \bar z ), \\
   u_2^* ( k , z ) & = u_1 ( - \bar k , - \bar z ) = F_{- \bar k }( - \bar z ) \psi ( - \bar z ) . 
   \end{split} 
\]
Using \eqref{eq:defV} again we have
\[  \begin{split} 
e ( k ) & = c(k)^2 \int_{\mathbb C/\Lambda } V ( z) u_2^* ( k , z ) \overline { u_1^* ( k , z ) } 
+  V ( - z) u_1^* (k , z ) \overline { u_2^* ( k , z ) } dm ( z )  \\
& =  \mp i c(k)^2 \int_{\mathbb C/\Lambda } | F_{- \bar k } ( - \bar z ) |^2 
\left( V ( z) \psi ( - \bar z ) \overline{ \psi (\bar z ) } - V ( -z ) \psi ( \bar z ) \overline{ \psi ( - \bar z ) } ) \right)
dm ( z ) \\
& = \mp i c(k)^2 \int_{\mathbb C/\Lambda } | F_{- \bar k } ( z ) |^2 
\left( \overline{ V ( z) } \psi (  z ) \overline{ \psi ( - z ) } -  V (  z )  \psi ( - z ) \overline{ \psi (  z ) } ) \right)
dm ( z ) \\
& = c ( k)^2 c ( \bar k)^{-2} e ( - \bar k ) = c ( k)^2 c ( \bar k)^{-2} e ( - \bar k ).
\end{split} 
\] 
We now recall \eqref{eq:odd} and \eqref{eq:Fbark} to see that $ e( k ) = - e ( \bar k )  $. 

\noindent
{\bf Remark.} We could avoid using \eqref{eq:Fbark} by considering (see \eqref{eq:defQ} and
\eqref{eq:defN}) 
\[ \mathscr R := {\mathscr Q} \mathscr N , \ \ \  \mathscr R ( D ( \alpha ) + k ) \mathscr R = - ( D ( \bar \alpha ) - \bar k ) . \]
This means that for a simple $ \alpha \in \mathbb R\cap \mathcal A  $, 
\[  \mathscr R u ( k ) = \rho ( k ) u ( - \bar k ), \ \ \ |\rho ( k ) | = 1 . \]
or that
\[  \overline{ u_2 ( k , \bar z ) } = \rho ( k ) u_1 ( -\bar k , z ) , \ \ 
\overline{ u_1 ( k, \bar z ) } = \rho ( k ) u_2 ( - \bar k , z ) . \]
Using that in the definition of $ e ( k ) $ we obtain $ e ( k ) = e ( - \bar k ) $. 

\section{Comments on degenerate magic angles}
\label{s:double} 

We now consider degenerate magic angles of multiplicity two -- see \cite[(1.7) and Theorem 1]{bhz3}.
The analogue of Figure~\ref{f:2} is shown in Figure \ref{f:22}. For small $\vert \lambda\vert \le 10^{-3}$, we observe that two out of the four flat bands undergo an upward shift in energy, while the other two undergo an equal and opposite downward shift in energy. This is different compared with simple magic angles where all flat bands move together for small $\lambda.$

Another difference is that the bands are only anchored at $K, K', \Gamma$ at zero energy, rather than
on the line through these points, as was the case for simple magic angles. 
For larger perturbations $\vert \lambda\vert \ge 10^{-2}$, the two upper and two lower bands undergo a splitting into separate bands, respectively. In addition, the two Dirac cones at $\Gamma$ start to overlap (center figure in Fig. \ref{f:22}). In the right Figure \ref{f:22}, we see that the eigenvalue perturbation for $\vert \lambda\vert \le 10^{-2}$ small is linear and for $\vert \lambda \vert \ge 10^{-1}$ quadratic terms are dominant. 
\begin{figure}
\includegraphics[width=5cm]{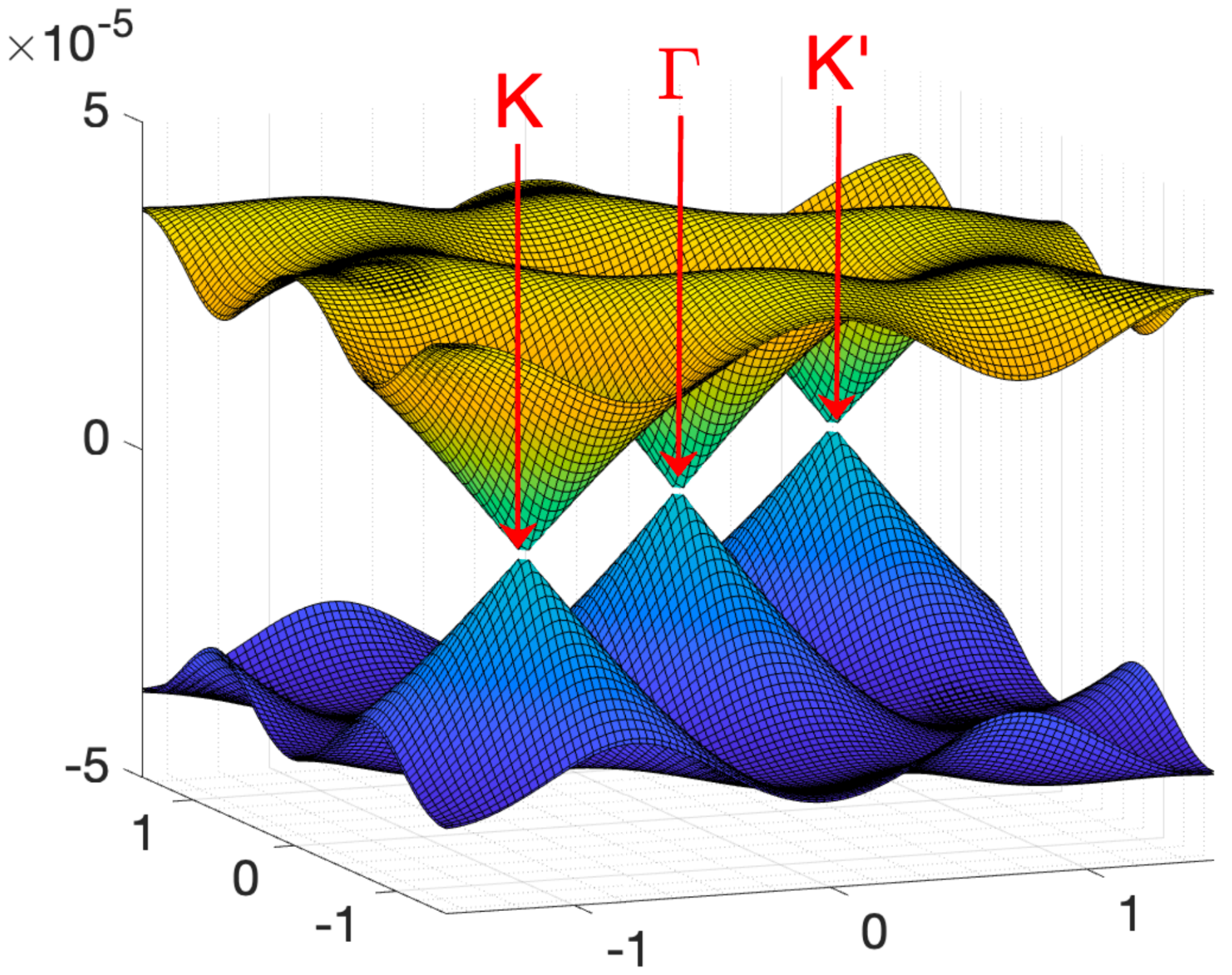}
\includegraphics[width=5cm]{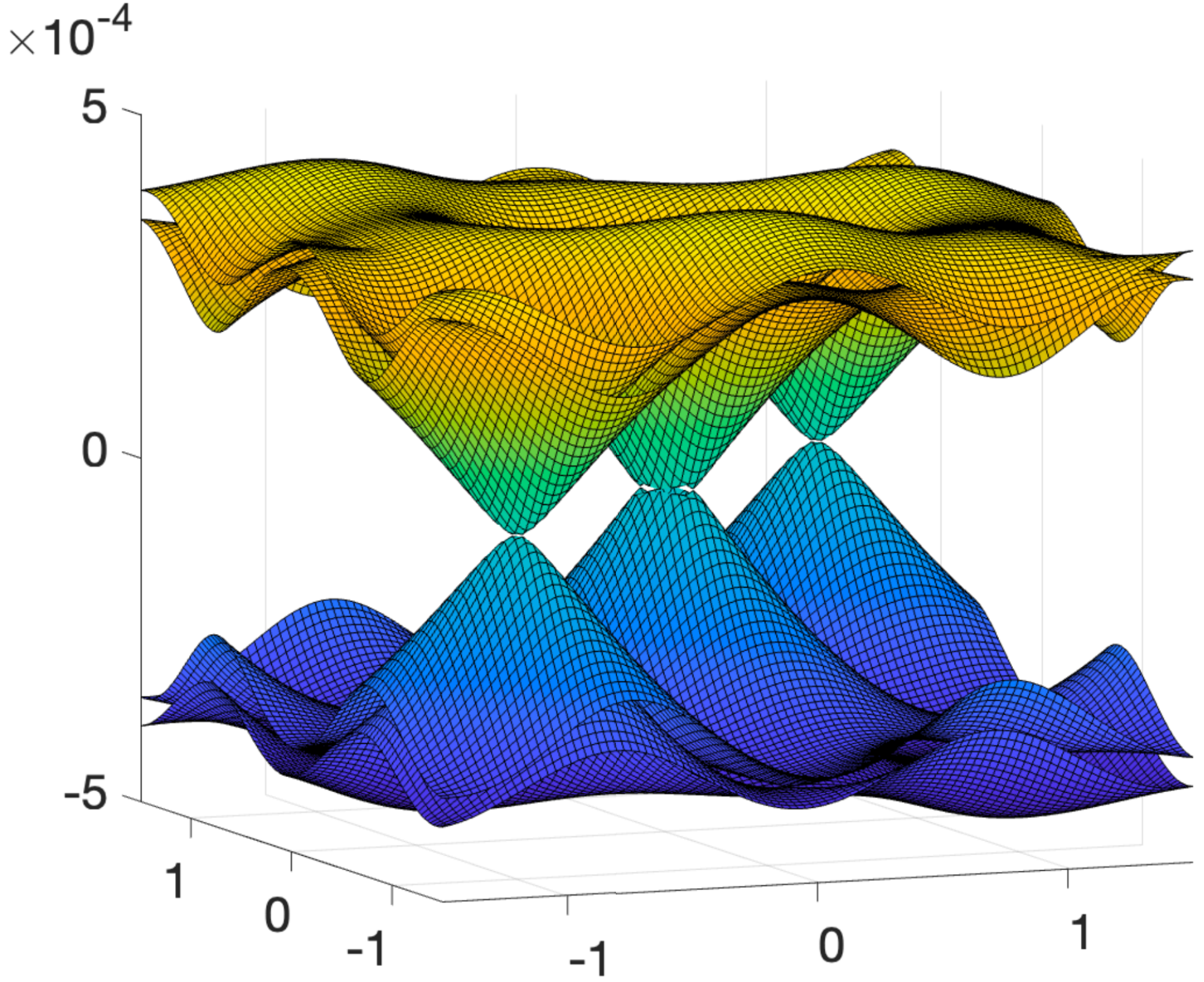}
\includegraphics[width=5cm]{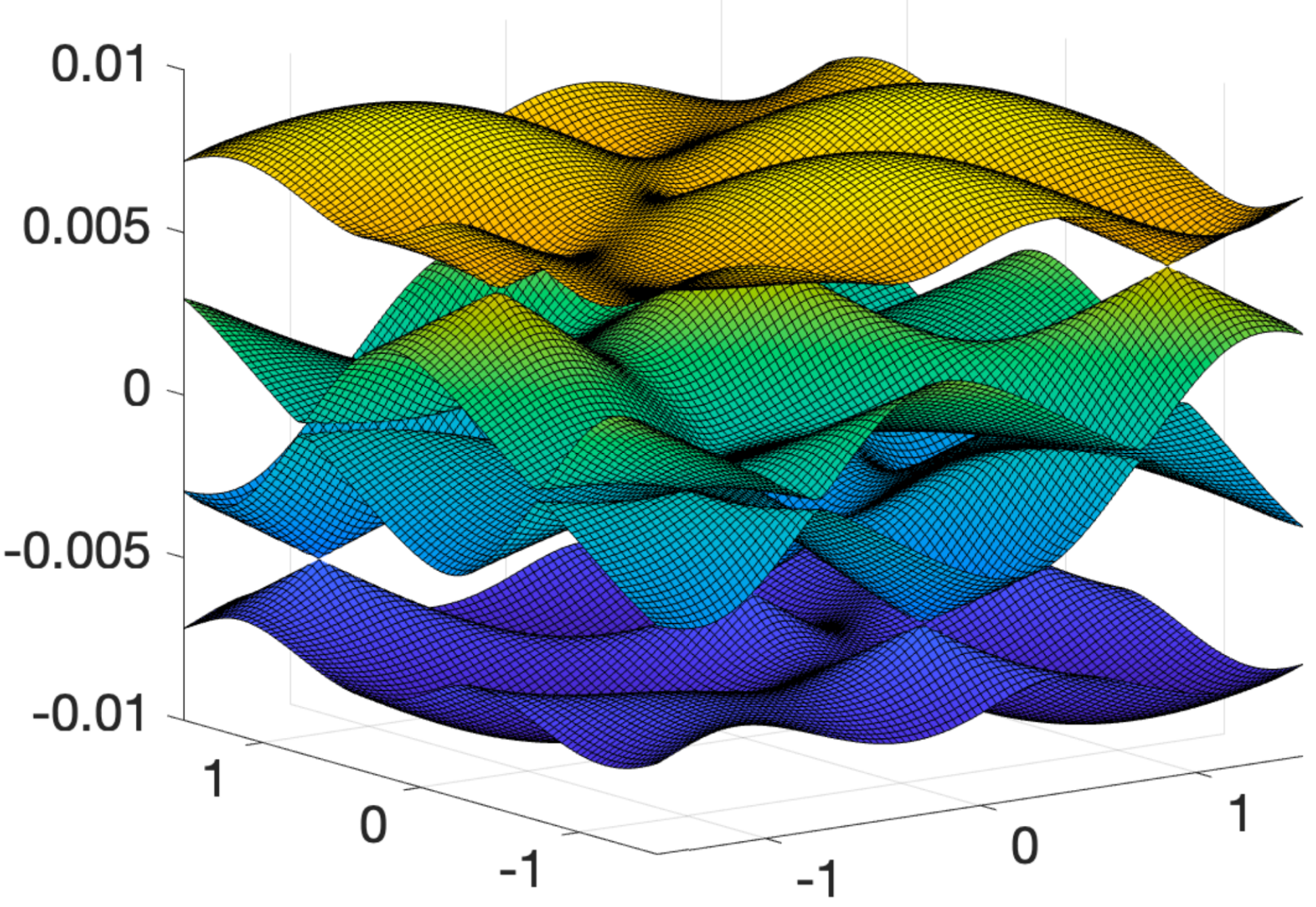}
\caption{\label{f:22} Plots of $ k \mapsto E_{\pm \ell} ( \alpha, \lambda, k ) $ with $\ell \in \{1,2\}$ for 
$ \alpha $ the first real magic element with potential  $AB'$ tunnelling potential $ U_1 ( z ) :=   ( U ( z) - U ( 2z ) )/\sqrt 2 $ and $AA'/BB'$ potential $ V ( z ) $ as in \eqref{eq:defUV}.
In the chiral limit, the potential $U_1$ exhibits double magic $ \alpha$'s on the real axis -- see \cite{bhz3}. We then study 
$ \lambda =  10^{-3}, 10^{-2}, 10^{-1 } $ (from left to right). We see that for very small coupling two of the four bands
"move together" and split only when the coupling gets larger.
For an animated version see 
\url{https://math.berkeley.edu/~zworski/2bands_1D.mp4}
}
\end{figure}

\end{document}